\documentclass[letterpaper,12pt]{article}
\usepackage{amsmath, mathtools}
\usepackage{amssymb}
\usepackage{amsthm}
\usepackage{graphicx}
\usepackage{verbatim}
\usepackage{enumitem}
\usepackage{color}


\newtheorem{theorem}{Theorem}
\newtheorem*{theorem*}{Theorem}
\newtheorem{definition}{Definition}

\newtheorem{lemma}{Lemma}

\newtheorem{prop}{Proposition}


\def\PP{\mathbf{P}}     

\newcommand{\bsigma}{\boldsymbol{\sigma}}
\newcommand{\tmax}{t_{\textnormal{max}}}
\newcommand{\hsigma}{\hat{\sigma}}

\definecolor{Red}{rgb}{1,0,0}
\definecolor{Blue}{rgb}{0,0,1}
\def\red{\color{Red}}

\newcommand{\snote}[1]{{\red(S: #1)}}

\title{Pairwise sequence alignment at arbitrarily large evolutionary distance}

\author{
Brandon Legried\footnote{
School of Mathematics, Georgia Institute of Technology.} 
\and 
Sebastien Roch\footnote{
Department of Mathematics, University of Wisconsin--Madison. Corresponding author. Email: roch@math.wisc.edu.}
}

\date{\today}

\begin{document}

\maketitle

\begin{abstract}
Ancestral sequence reconstruction is a key task in computational biology. It consists in inferring a molecular sequence at an ancestral species of a known phylogeny, given descendant sequences at the tip of the tree. In addition to its many biological applications, it has played a key role in elucidating the statistical performance of phylogeny estimation methods. Here we establish a formal connection to another important bioinformatics problem, multiple sequence alignment, where one attempts to best align a collection of molecular sequences under some mismatch penalty score by inserting gaps. Our result is counter-intuitive: we show that perfect pairwise sequence alignment with high probability is possible in principle \emph{at arbitrary large evolutionary distances}---provided the phylogeny is known and dense enough. We use techniques from ancestral sequence reconstruction in the taxon-rich setting together with the probabilistic analysis of sequence evolution models involving insertions and deletions. 
\end{abstract}

\section{Introduction}

Ancestral sequence reconstruction (ASR) is a key task in computational evolutionary biology~\cite{liberles_ancestral_2007}. It consists in inferring a molecular sequence at an ancestral species of a known phylogeny, given descendant sequences at the tip of the tree. Numerous approaches are available for this task. Some are based on statistical models of sequence evolution on a tree, while others rely on combinatorial optimization formulations~\cite{semple_phylogenetics_2003, yang_molecular_2014}. In addition to its many biological applications, ASR has played a key role in elucidating the statistical performance of phylogeny estimation methods~\cite{mossel_impossibility_2003,mossel_phase_2004,roch_toward_2010,roch_phase_2017}. Here we establish a formal connection to sequence alignment. 

Rigorous analyses of the accuracy of ASR methods have been performed mainly in two asymptotic settings. In phylogenies of arbitrarily large depth, an achievable goal is to infer a sequence that is correlated site-by-site with the true ancestral sequence~\cite{steel_five_1995,ioffe_extremality_1996,mossel_recursive_1998,evans_broadcasting_2000}. In the taxon-rich setting, on the other hand, where the depth of the phylogeny is bounded as the number of taxa increases, consistent estimators are known to exist~\cite{gascuel_inferring_2010,roch_sufficient_2021}. That is, under conditions on the branching of the phylogeny around its root, the correct inference of a single site in the ancestral sequence can be guaranteed as the number of leaves goes to infinity. 

Most theoretical results in this area are derived under models of sequence evolution by single site substitutions. More complex models allowing for site insertions and deletions (indels) have also been considered~\cite{andoni_global_2012,ganesh_optimal_2019,fan_statistically_2020}. The star case, also known as trace reconstruction, has been the subject of much recent interest~\cite{holenstein_trace_2008,nazarov_trace_2017,holden_lower_2020,davies_approximate_2021,davies_reconstructing_2019}. See also \cite{THATTE200658,mitrophanov_convergence_2007,daskalakis2013alignment,Allman2015StatisticallyCK,fan2020impossibility} for rigorous analyses of indel models in other contexts, e.g., distance-based phylogeny reconstruction.

Indel models are closely related to another important bioinformatics problem, multiple sequence alignment (MSA), in which one attempts to best align a collection of molecular sequences under some mismatch penalty score by inserting gaps. In practice, MSA is a hard problem, especially at large evolutionary distances~\cite{rost_twilight_1999,chang_phylogenetic_2008}. While statistical approaches based on indel models have also been developed~\cite{Lunter2005}, commonly used approaches involve progressively aligning the given sequences up a guide tree, in what is reminiscent of ASR procedures~\cite{ranwez:hal-02535389}. In fact, many trace reconstruction and ASR methods under indels involve partial local alignments of sequences.

In this paper, we combine insights from ASR in the taxon-rich setting together with the probabilistic analysis of indel models to prove the first (as far as we know) rigorous guarantee for sequence alignment under an indel model on a phylogenetic tree. Our result is somewhat counter-intuitive: we show that perfect pairwise sequence alignment with high probability is in principle possible \emph{at arbitrary large evolutionary distances}---provided the phylogeny is known and dense enough. While such a condition may not be satisfied in real datasets, our analysis is a step towards a better theoretical understanding of MSA and its connections to ASR.

In a nutshell, we take advantage of the density of the phylogeny to estimate ancestral sequences with high probability along the path between two leaf sequences of interest, then reconstruct the history of mutations along the way. For the ASR step, we use a standard phylogenetic method known as parsimony, which seeks to use the smallest number of mutations possible to explain sequences at the leaves of a phylogeny. Rigorous analyses of parsimony are often challenging and have revealed the intricate, often unintuitive, behavior of the method~\cite{li_more_2008,fischer_maximum_2009,zhang_analyzing_2010,herbst_ancestral_2017,herbst_accuracy_2018}. In our taxon-rich setting, branching process results lead to rigorous guarantees on the ancestral reconstruction. 

The rest of the paper is organized as follows. In Section~\ref{section:main-results}, we state our main result after introducing some background. The alignment algorithm is presented in Section~\ref{section:alignment}. The proof is comprised of two parts: the ancestral estimation step is analyzed in Section~\ref{section:ancestral} and the alignment step is analyzed in Section~\ref{section:one-mutation}.

\section{Background and main result}
\label{section:main-results}

In this section, we state our main result. First, we introduce the model of sequence evolution we use here as well as the multiple sequence alignment problem.

\subsection{Definitions} 

We consider the TKF91 insertion-deletion (indel) sequence evolution model. Technically, we use a slight variant of the TKF91 model defined in~\cite{Thorne1991}, where we only allow an alphabet with two letters $0$ and $1$ to simplify the analysis and its presentation.  Our results extend naturally to more general settings.

\begin{definition}[TKF91  model: two-state version]
\label{Def:BinaryIndel}
Consider the following Markov process $\mathcal{I} = \{\mathcal{I}_{t}\}_{t \geq 0}$ on the space $\mathcal{S}$ of binary digit sequences together with an \textbf{immortal link $``\bullet"$}, that is,
\begin{equation*}\label{S}
		\mathcal{S} := ``\bullet" \otimes \bigcup_{M\geq 1} \{0,1\}^M,
\end{equation*}
where the notation above indicates that all sequences begin with the immortal link.  Positions of a sequence, except for that of the immortal link,  are called \textbf{sites} or \textbf{mortal links}. Let $(\eta,\lambda,\mu) \in (0,\infty)^{3}$ and $(\pi_0,\pi_1) \in [0,1]^2$ with $\pi_0 + \pi_1 = 1$ be given parameters.  The continuous-time dynamics are as follows:  If the current state is the sequence $\vec{x} \in \mathcal{S}$, then the following events occur independently:

\begin{itemize}
	\item \emph{Substitution:}  Each site 
	is substituted independently at rate $\eta > 0$.  When a substitution occurs, the corresponding digit is replaced by $0$ and $1$ with probabilities $\pi_0$ and $\pi_1$, respectively.
	\item \emph{Deletion:}  Each site
	is removed independently at rate $\mu$.
	\item \emph{Insertion:}  Each site, as well as the immortal link,  gives birth to a new digit independently at rate $\lambda$.  When a birth occurs, the new site is added immediately to the right of its parent site.  The newborn site has digit $0$ and $1$ with probabilities $\pi_0$ and $\pi_1$, respectively.
\end{itemize}
\end{definition}

We run this process on a rooted metric tree as follows. Consider a \textit{rooted binary tree} $T = (V,E,\rho,\mathbf{t})$ with vertices $V$, edges $E$, root $\rho$, and edge lengths $\mathbf{t} = \{t_e\}_{e \in E}$ (in time units). We restrict ourselves to ultrametric trees, that is, the sum of edge lengths from root to leaf is the same for every leaf. We refer to this common quantity as the \textbf{depth} of the tree and denote it by $h$. The rooted metric tree $T$ is then indexed by all points along the edges of $T$.  The root vertex has an initial sequence $\sigma_{\rho} \in \mathcal{S}$. With an initial sequence $\sigma_u \in \mathcal{S}$, the TKF91 process is recursively performed on each descending edge $e = (u,v)$ over the time interval $[0,t_e]$ to obtain another sequence $\sigma_v \in \mathcal{S}$. Processes running along descending edges of $u$ are independent, conditioned on state $\sigma_u$ at $u$.  We refer to the full process as the \textbf{(two-state) TKF91 process on tree $T$}.

For any sequence $\sigma \in \mathcal{S}$, let $|\sigma|$ be the length of the sequence, and let  $|\sigma|_{0}$ and $|\sigma|_{1}$ be the number of $0$'s and $1$'s in the sequence, respectively.  The stationary distribution of the sequence length $|\sigma| = M$ is known~\cite{Thorne1991} to be 
\begin{equation} 
\label{eq:LengthStationary}
    \gamma_{M} = \left(1 - \frac{\lambda}{\mu}\right)\left(\frac{\lambda}{\mu}\right)^{M}, \qquad M \in \mathbb{Z}_+,
\end{equation} 
provided $\mu > \lambda$.
We assume that the root sequence $\sigma_\rho$ follows its stationary distribution.  That is, $|\sigma_{\rho}|$
is distributed according to $\gamma_{M}$ and its sites are i.i.d.~in $\{0,1\}$ with respective probabilities $\pi_0$ and $\pi_1$.
Stationarity of $\sigma_{\rho}$ implies stationarity of the TKF91 process throughout the tree. We assume from now on that $\mu > \lambda$ and that stationarity holds. 

\paragraph{Some notation}
Later on, we will need the following notation.
For a sequence $\sigma \in \mathcal{S}$, 
let $\mathcal{S}_{s}(\sigma)$,  $\mathcal{S}_{d}(\sigma)$, and $\mathcal{S}_{i}(\sigma)$ be 
the sequences that differ from $\sigma$ 
respectively by a single substitution, a single 
deletion, and a single insertion.
Observe that these sets are disjoint as
the sequence lengths in each necessarily differ.
Further, let $\mathcal{S}_1(\sigma) = \mathcal{S}_{s}(\sigma) \cup \mathcal{S}_{d}(\sigma) \cup \mathcal{S}_{i}(\sigma)$ be the sequences obtained by performing a single mutation on $\sigma$,
and define
\begin{equation}\label{eq:lambdastar}
\lambda^{\ast}(\sigma) = \sum_{\tau \in \mathcal{S}_1(\sigma)}Q(\sigma,\tau) = \lambda (|\sigma| + 1) + \mu |\sigma| + \eta \pi_1 |\sigma|_0 + \eta \pi_0 |\sigma|_1
\end{equation}
as the total rate under the TKF91 process of moving
away from $\sigma$, where $Q(\sigma,\tau)$ is the rate
at which the TKF91 process on an edge jumps from $\sigma$ to $\tau$. Formula~\eqref{eq:lambdastar} is derived formally in the appendix.

\subsection{Multiple sequence alignment}

To compare sequences descending from a common ancestor through substitutions, insertions and deletions, it is natural to attempt to align them as best as possible, that is, to construct a multiple sequence alignment.
\begin{definition}[Multiple sequence alignment]
For any integer $n \geq 1$ and sequences $\bsigma = (\sigma_{v_1},\ldots,\sigma_{v_m}) \in \mathcal{S}^m$ at points $v_1,\ldots,v_m \in T$, a \textbf{multiple sequence alignment} (or pairwise alignment when $m=2$) is a collection of sequences $\mathbf{a}(\bsigma) = (a_1(\bsigma),\ldots,a_m(\bsigma))$ whose entries come from $\{0,1,-\}$ ($-$ is called a \textit{gap}) such that: \begin{itemize}
    \item the lengths satisfy $$|a_1(\bsigma)| = |a_2(\bsigma)| = \cdots = |a_m(\bsigma)| \geq \max\{|\sigma_{v_1}|,|\sigma_{v_2}|,\ldots,|\sigma_{v_m}|\},$$
    \item no corresponding entries of $a_1(\bsigma),\ldots,a_m(\bsigma)$ all equal $-$, and
    \item removing $-$ from $a_i(\bsigma)$ yields $\sigma_{v_i}$ for all $i \in \{1,2,\ldots,m\}$.
\end{itemize} A multiple sequence alignment can be expressed as an $m \times |a_1(\bsigma)|$ matrix where the rows are the sequence alignments and where no column consists of all gaps. If $m=2$, the alignment is referred to as pairwise.
\end{definition}
\noindent More generally, a multiple sequence alignment procedure may take as input further auxiliary information (beyond the sequences to be aligned), such as a tree or sequences at other points of the tree. Our alignment algorithm (see Section~\ref{section:alignment}) will indeed use additional information.

Two sites, one from one sequence and the other from another sequence, are said to be \textbf{homologous} provided they descend from a common site in their most recent common ancestral sequence \textit{only through substitutions} under the evolutionary process on the tree. 
A \textbf{true} multiple sequence alignment 
is one that places homologous sites in the same column and non-homologous sites in different columns. We note however that certain homology relationships are unknowable a priori: for example, if in the course of evolution a $0$ is inserted in a sequence next to another $0$, which of them descends from the ancestral $0$ is arbitrary. Here we take the convention that a repeated site is always inserted at the beginning of a run; and that similarly a repeated site is always deleted at the beginning of a run. 

\subsection{Statement of main result}

The following theorem states that it is possible to construct with high probability a true pairwise alignment of the sequences at two arbitrary leaves $v$ and $w$ of a phylogeny as long as the maximal branch length is sufficiently small.  
\begin{theorem}[Main Result]
\label{thm:main}
Fix $\eta,\mu,\lambda \in (0,\infty)$, the substitution, deletion, and insertion rates under the TKF91 model. There is a polynomial-time alignment procedure $A$ such that 
for any tree depth $h > 0$ and any failure probability $\varepsilon > 0$, there exists a maximum branch length $\tmax := \tmax(h,\varepsilon) > 0$ such that
the following property holds. For any \textit{rooted binary tree} $T = (V,E,\rho,\mathbf{t})$ with vertices $V$, edges $E$, root $\rho$, and edge lengths $\mathbf{t} = \{t_e\}_{e \in E}$, assume that the leaves $\partial T = \{\ell_i\}_{i=1}^{n}$
are ordered from left to right in a planar realization of $T$, and let
$v = \ell_1$ and $w=\ell_n$. Then the alignment procedure applied to the sequences $\sigma_{\ell_1},\sigma_{\ell_2},\ldots,\sigma_{\ell_n}$ outputs a true pairwise alignment of $\sigma_v$ and $\sigma_w$ with probability at least $1- \varepsilon$, provided that $t_e \leq \tmax$ for all edges $e \in E$.
\end{theorem}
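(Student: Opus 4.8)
The plan is to exploit the density of the tree to reconstruct the sequences at \emph{every} vertex along the path $P$ joining $v=\ell_1$ and $w=\ell_n$ through their most recent common ancestor $r$, and then to recover the homology between $\sigma_v$ and $\sigma_w$ by stitching together the single mutations that separate consecutive vertices of $P$. First I would isolate the events that make this possible and show each holds with probability close to $1$ once $\tmax$ is small (with $h$ fixed). Let $B_0$ be the event that all sequences appearing near $P$ have length at most some $L=L(h,\varepsilon)$, so that the total jump rate $\lambda^{\ast}(\cdot)$ of \eqref{eq:lambdastar} is bounded by a constant $C=C(h,\varepsilon)$; this follows from stationarity and the geometric tail \eqref{eq:LengthStationary} together with crude control of length fluctuations over depth $h$. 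Let $B_1$ be the event that every edge of $P$ carries at most one mutation. On $B_0$ the number of mutations on an edge $e$ is stochastically dominated by a Poisson variable of mean $Ct_e$, so the chance of two or more is $O(t_e^2)$, and since $P$ has total length at most $2h$ we get $\P[B_1^c]\le C^2\sum_{e\in P}t_e^2\le C^2\tmax\sum_{e\in P}t_e\le 2C^2h\,\tmax\to 0$.

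The heart of the argument, and the step I expect to be the main obstacle, is the ancestral reconstruction: showing that parsimony recovers $\sigma_u$ \emph{exactly} at every $u\in P$ with high probability. The difficulty is that the subtree below a vertex near $r$ has depth of order $h$, which is not small, so its leaves are genuinely diverged from $\sigma_u$ and a naive consensus over those leaves need not return $\sigma_u$. I would instead analyze parsimony through \emph{mutation-free clusters}: contracting every edge that carries no mutation partitions the vertices into blocks of equal sequence, and on $B_1$ adjacent blocks differ by exactly one element of $\mathcal{S}_1(\cdot)$. Restricted to the descendants of $u$, the cluster of $u$ dominates a Galton--Watson process whose offspring mean is at least $2\,e^{-C\tmax}$, hence supercritical once $\tmax$ is small; a branching-process estimate then shows that this cluster reaches a fraction bounded below by (order) $e^{-Ch}$ of the leaves under $u$ with probability approaching $1$, while every competing cluster is separated from it by at least one mutation. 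This separation is what lets parsimony single out $\sigma_u$, and the same estimates should bound the probability of a reconstruction error at $u$ by a quantity controlled by the \emph{local} mutation activity rather than by the raw number of spine vertices, so that a union bound keeps the total reconstruction-failure probability on $P$ below $\varepsilon/2$. Making clusters, their sizes, and the single mutations relating them all well defined despite the varying sequence lengths induced by insertions and deletions is the principal technical burden.

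Finally, on the event that $B_0$, $B_1$, and exact reconstruction all hold, I would run the alignment step deterministically. For each edge $(u,u')$ of $P$ the two reconstructed sequences are either equal or differ by a single substitution, deletion, or insertion, and comparing them identifies the mutation together with its location; the only ambiguity---which site of a run is inserted or deleted---is resolved by the left-of-run convention adopted in the definition of a true alignment, so the induced site-to-site homology map on this edge agrees with the truth. Composing these maps upward from $v$ to $r$ and downward from $r$ to $w$ yields the homology correspondence between the sites of $\sigma_v$ and $\sigma_w$---two sites being homologous precisely when they are linked through substitutions and non-mutations to a common site of $\sigma_r$---from which the required true pairwise alignment is read off. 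Since parsimony and the stitching procedure run in time polynomial in $n$ and $\sum_i|\sigma_{\ell_i}|$, choosing $\tmax=\tmax(h,\varepsilon)$ small enough that $\P[B_0^c]+\P[B_1^c]+\P[\text{reconstruction fails on }P]\le\varepsilon$ completes the proof.
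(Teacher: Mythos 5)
Your overall architecture matches the paper's (reconstruct the spine, then stitch single mutations into a homology map), and your second-moment bound for the one-mutation event $B_1$ is sound --- it is essentially the paper's Lemma~\ref{lemma:atmostone} applied edge by edge. But there are two genuine gaps. First, your union bound over reconstruction failures is taken over \emph{every} vertex of the path $P$, and the number of such vertices is not controlled by $h$ and $\tmax$: the theorem allows arbitrarily short edges, so $P$ can have arbitrarily many vertices while its total length stays $\leq 2h$. Unlike the two-or-more-mutations probability, the probability that parsimony errs at a given spine vertex does \emph{not} scale with the length of the incident edge --- it is a fixed small constant $\delta_a$ depending on the whole subtree below --- so summing it over an unbounded number of vertices destroys the bound, and your hope that the error is ``controlled by the local mutation activity'' is not substantiated by any mechanism. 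The paper resolves exactly this with its backbone-sparsification pre-processing step: backbone vertices are pruned so that consecutive retained ones are at distance in $[\delta_1, 2\delta_1]$, which caps the number of reconstruction events at $O(h/\delta_1)$ and lets $\tmax$ (hence $\delta_a = \delta_1^2$) be chosen \emph{after} $\delta_1$ so that the union bound closes.

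Second, your criterion for parsimony correctness is too weak. Knowing that the mutation-free cluster of $u$ survives as a supercritical branching process and reaches an $e^{-Ch}$ fraction of the leaves below $u$, with competing clusters one mutation away, does not imply that the Fitch recursion returns $\sigma_u$: Fitch is a recursive intersection/union, not a vote, and a surviving cluster that is locally thin (say, a bare path with occasional branching) leaves $\hat{S}_u$ containing spurious states, after which the uniform tie-break can fail. What is actually needed is a \emph{recursive} density statement: the paper shows (via the percolation coupling and \cite[Lemma 8]{mossel2001}) that the open cluster contains a $(2,3)$-diluted subtree --- every cluster vertex at depth $3i$ has at least two cluster descendants at depth $3i+3$ --- and the induction of Lemma~\ref{lemma:fitch-diluted} shows this forces $\hat{S}_{z_k} = \{\sigma_{z_k}\}$ exactly. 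Your write-up names the right objects (mutation-free clusters, branching-process domination) but omits the structural property that makes the Fitch induction go through, and this is the heart of the ancestral-reconstruction step rather than a routine detail. A smaller issue of the same flavor: your event $B_0$ controls the lengths of ``all sequences near $P$,'' again an unbounded union; the paper only conditions on the length at the root of each of the $O(h/\delta_1)$ retained subtrees.
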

\noindent Note that the tree depth $h$ is arbitrary. The alignment procedure, which is described in Section~\ref{section:alignment}, takes as input leaf sequences at the leaves of $T$ as well as $T$ itself. 

\paragraph{Extensions}
While we assume above that the rate of substitution is the same throughout the tree, our proof still goes through if the parameter $\eta$ is merely an upper bound on that rate across edges. Similarly, our two-state assumption and the details of the substitution model
do not play a critical role in the proof. We make these assumptions to simplify the presentation. 

\section{Alignment algorithm}
\label{section:alignment}

In this section, we describe the alignment procedure of 
Theorem~\ref{thm:main}. We emphasize that this algorithm is not meant to be practical, but rather serve as a proof of our main result. 

\subsection{Overview of full alignment algorithm}

We introduce the following alignment algorithm $A$ which takes as input a rooted metric tree $T$, two distinguished leaves $v$ and $w$, all leaf sequences, and a pre-processing parameter $\delta_1$. We take $\delta_1$ to satisfy $\tmax \leq \delta_1 \leq h$.
The algorithm outputs a 
pairwise alignment for the sequences at $v$ and $w$. 

There is a unique path between $v$ and $w$ that we henceforth call the \textit{backbone}. We let $B$ be the number of \textit{non-root} vertices on the backbone. Then $v = x_1$ and $w = x_B$ and the other \textit{non-root} backbone vertices are in order $x_2,...,x_{B-1}$. For some parts of the algorithm and analysis, it will be convenient to use an alternative numbering of the backbone vertices---away from the root, numbering the left side, then numbering the right side. Specifically, let $x_1,\ldots,x_{B^-}$ be the backbone vertices on the same side of the root as $x_1$. Let $x_{B^-+1},\ldots,x_B$ be the backbone vertices on the same side of the root as $x_B$ and let $B^+$ be their number. Then we set 
$$
\tilde{x}^-_i := x_{B^- - (i-1)}, \qquad i =1,\ldots, B^-
$$ 
and
$$
\tilde{x}^+_i := x_{B^- + i}, \qquad i = 1,\ldots, B^+. 
$$
Notice in particular that $\tilde{x}^-_1$ and $\tilde{x}^+_1$ are the children of the root and $\tilde{x}^-_{B^-} = x_1$ and $\tilde{x}^+_{B^+} = x_B$.

\begin{figure}[t]
    \centering
    \includegraphics[scale=1.2]{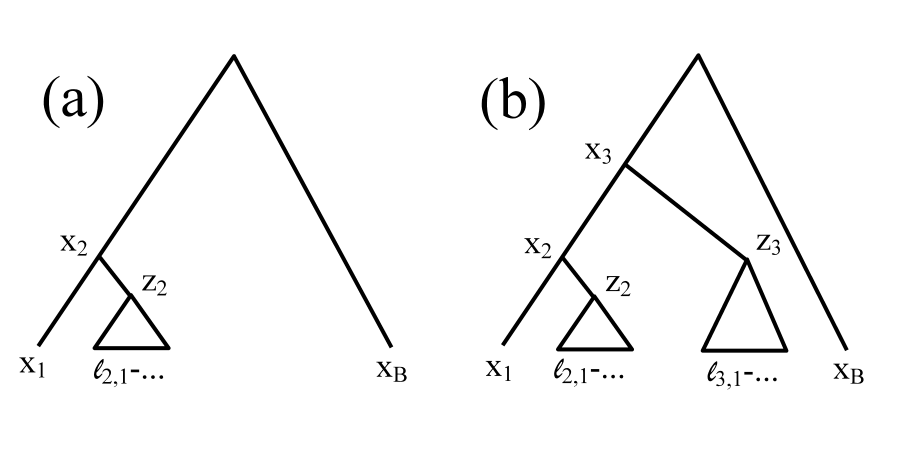}
    \caption{(a) Tree $T$ with leaf sequences $\sigma_1 = \sigma_{x_1},\sigma_2= \sigma_{\ell_{2,1}},..., \sigma_{\ell_{i_2}},\sigma_B = \sigma_{x_B}$. (b) Tree $T$ with leaf sequences $\sigma_1 = \sigma_{x_1},\sigma_2 = \sigma_{x_{2,1}},..., \sigma_{\ell_{i_2}},\sigma_{3} = \sigma_{x_{3,1}},...,\sigma_{\ell_{i_3}},\sigma_B$.}
    \label{fig:AlignmentProc1-12}
\end{figure}
We now describe the main steps of the algorithm. Some details will be given in the following subsections.
Figure~\ref{fig:AlignmentProc1-12} illustrates part of this algorithm at a high level.

We start with a pre-processing step.
\begin{itemize}
    \item \textbf{Pre-processing: backbone sparsification.} We first construct a subtree $T'$ by pruning some backbone vertices and their descendants. Initialize $T' := T$. Then, for $o = -,+$ and for $k = 1,...,B^o-1$,
    \begin{enumerate}
        \item Check whether the vertex $\tilde{x}^o_k$ is a vertex in the tree $T'$.  If not, do nothing.
        \item\label{item:preproc-ell} If $\tilde{x}^o_k$ is in the tree $T'$, find the minimal $\ell \geq 1$ such that the distance (accounting for edge lengths) between $\tilde{x}^o_k$ and $\tilde{x}^o_{k+\ell}$ is at least $\delta_1$. Observe that, by assumption, the distance between $\tilde{x}^o_k$ and $\tilde{x}^o_{k+\ell}$ is necessarily at most $\delta_1 + \tmax \leq 2 \delta_1$.
        \item Remove the vertices $\tilde{x}^o_{k+1},...,\tilde{x}^o_{k+\ell-1}$, except $\tilde{x}^o_{B^o}$, and all of their off-backbone descendants from the tree $T'$ (if they exist).
    \end{enumerate} 
\end{itemize}
The result is a tree where the distance between consecutive vertices on the backbone is in $[\delta_1, 2 \delta_1]$ (by the observation in Item~\ref{item:preproc-ell}), with the possible exception of the children of the root and the last pair on each side of the root all of whose distances are in $(0,2\delta_1]$. To simplify the notation, we re-assign $T$ to be this new rooted metric tree and we re-assign $x_1, x_2, \ldots,x_B$ to be the backbone vertices on this tree (with an updated value for $B$ and updated alternative numbering $\tilde{x}^o_k$ for $o = -,+$ and $k=1,\ldots,B^o$).

We then proceed with the alignment algorithm, which consists of two main steps both proceeding along the backbone:
\begin{enumerate}
    \item \textbf{Ancestral estimation:} We infer the ancestral sequences at the backbone vertices as follows. For $k = 2,...,B-1$:
    \begin{enumerate}
        \item For the child vertex $z_k$ of $x_k$ that is off the backbone, infer the sequence $\hsigma_{z_k}$ at $z_k$ using the Fitch method~\cite{Fitch71} (described below in Section~\ref{section:fitch}) applied to the subtree rooted at $z_k$.
        \item Set $\hsigma_{x_k}$ equal to $\hsigma_{z_k}$.
    \end{enumerate}
    \item \textbf{Recursive alignment:} Now that the sequences at the non-root backbone vertices $\{x_k\}_{k=1}^{B}$ have been estimated, we construct a multiple sequence alignment sequentially, starting from $x_1$, going to $x_2$, and ending at $x_{B-1}$ and $x_B$. This stepwise alignment procedure is described in Section~\ref{section:stepwise} below. If the inferred sequences of successive backbone vertices are not at most one mutation apart, then we terminate the algorithm with no output. 
    Else, a pairwise sequence alignment is produced for vertices $v = x_1$ and $w = x_B$. 
\end{enumerate} 

We will show in Proposition~\ref{prop:Fitch2} below that, with high probability, 
$\hsigma_{z_k} = \sigma_{z_k}$ for all $k = 2,\ldots, B-1$. 
We will then show in Proposition~\ref{prop:Intersection} below that the above stepwise alignment outputs a true pairwise alignment with high probability. 


\subsection{Ancestral sequence reconstruction}
\label{section:fitch}

We briefly describe below the ancestral sequence reconstruction subroutine. Note that we use the Fitch method for the convenience of its analysis, but other methods could also be used.
\begin{definition}[Fitch estimator]
Let $T = (V,E)$ be a finite binary rooted tree with root $z$ and leaf set $\partial T \subset V$ with given leaf sequences $(\sigma_{\ell})_{\ell \in \partial T}$.  For any leaf vertex $\ell$, define $\hat{S}_\ell \subset \mathcal{S}$ to be the subset $\hat{S}_\ell = \{\sigma_\ell\}$.  For each non-leaf vertex $v$ with children $v_1$ and $v_2$, define $\hat{S}_v \subset \mathcal{S}$ recursively to be \begin{align*}
    \hat{S}_v = \begin{cases}
        \hat{S}_{v_1} \cap \hat{S}_{v_2} & \textnormal{if} \ \hat{S}_{v_1} \cap \hat{S}_{v_2} \ne \emptyset \\
        \hat{S}_{v_1} \cup \hat{S}_{v_2} & \textnormal{otherwise}.
    \end{cases}
\end{align*} Then define the \textit{Fitch estimator} $\hat{\sigma}_z$ of $\sigma_z$ to be a uniformly chosen member of $\hat{S}_z$.
\end{definition} 
\noindent An analysis of this method in our setting is provided in Proposition~\ref{prop:Fitch2} below.

\subsection{Stepwise alignment}
\label{section:stepwise}

In this section, we describe the stepwise alignment subroutine. It is based on the assumption that along the backbone (of the pruned tree): 
\begin{enumerate}
\item[(i)] the sequences have been correctly inferred; and 
\item[(ii)] consecutive ones differ by at most one mutation. 
\end{enumerate}
We establish these facts in Propositions~\ref{prop:Fitch2} and~\ref{prop:Intersection} below.
In these circumstances, we show that homologous sites can be traced 
(up to the convention we described earlier).
We will construct a sequence of alignments
$\mathbf{a}^2$, $\mathbf{a}^3$, etc. 
We first describe the alignment of two sequences, then the alignment of alignments, and so on.



Given two sequences $\hat\sigma, \hat\tau$ satisfying the assumptions (i) and (ii) above, there are three possible cases:
\begin{enumerate}[label=(\Alph*)]
    \item If $\hat\sigma=\hat\tau$, then a true 
    alignment 
    is obtained by setting $a^{2}_1(\hat\sigma,\hat\tau) = \hat\sigma$ and $a^{2}_2(\hat\sigma, \hat\tau) = \hat\tau$, corresponding to no mutation. 
    \item If $|\hat\sigma| = |\hat\tau|$ but $\hat\sigma$ and $\hat\tau$ agree on all sites except one, 
    then a true alignment 
    is obtained by setting $a^{2}_1(\hat\sigma,\hat\tau) = \hat\sigma$ and $a^{2}_2(\hat\sigma,\hat\tau) = \hat\tau$, corresponding to exactly one substitution between the sequences.
    \item If $|\hat\sigma| = |\hat\tau| + 1$ (or vice versa) and there exists $j \in \{1,2,...,|\hat\tau|\}$ and $\hat\sigma_{\textnormal{ins}} \in \{0,1\}$ such that \begin{align*}
    \hat\sigma_i = \begin{cases}
    \hat\tau_i & i < j \\
    \hat\sigma_{\textnormal{ins}} & i = j \\
    \hat\tau_{i-1} & i > j.
    \end{cases}
\end{align*} 
As we discussed before, the location of the insertion cannot be determined from the sequences alone.
For example, if $\hat\sigma$ and $\hat\tau$ are separated by an insertion so that they are given by 
\begin{align*}
    \hat\tau &= (0,1,0,1,0,1,0,0,0,0,0,1,0)\\
    \hat\sigma &= (0,1,0,1,0,1,0,0,0,0,0,0,1,0),
\end{align*} 
we cannot tell which site gave birth to the new $0$ to obtain $\hat\sigma$. So we assume by convention that $j$ is the minimal choice possible.
Then a true 
alignment is obtained by setting 
$a^{2}_1(\hat\sigma,\hat\tau) = \hat\sigma$ and for $i=1,\ldots,|\hat\tau|+1$ 
\begin{align*}
    a^{2}_2(\hat\sigma,\hat\tau)_i =
        \begin{cases}
            \hat\tau_i & i < j \\
            - & i = j \\
            \hat\tau_{i-1} & i > j,
        \end{cases}
\end{align*}
corresponding to
a single site $\hat\sigma_{\textnormal{ins}}$ being inserted into the sequence $\hat\tau$ to the left of the $j$th site to obtain $\hat\sigma$. 
\end{enumerate}


In fact, we will need to align 
alignments along the backbone, rather than sequences. Suppose we have sequences $\hat\sigma_{x_1},\hat\sigma_{x_2},...,\hat\sigma_{x_B}$ and successive pairs $\{\hat\sigma_{x_1},\hat\sigma_{x_2}\},\{\hat\sigma_{x_2},\hat\sigma_{x_3}\},...,\{\hat\sigma_{x_{B-1}},\hat\sigma_{x_B}\}$ each satisfy exactly one of the cases (A), (B), or (C). (We terminate without output if the assumptions do not hold.) Then we recursively construct a multiple sequence alignment as follows.
To simplify the notation, we let
$
\hat\sigma_{1:k}
= (\hat\sigma_{x_1},\ldots,\hat\sigma_{x_k}).
$
\begin{enumerate}
    \item Given $\hat\sigma_{x_1}$ and $\hat\sigma_{x_2}$, let $a^{2}_1(\hat\sigma_{1:2})$ and $a^{2}_2(\hat\sigma_{1:2})$ be the pairwise alignment constructed above.
    \item For $k = 3,...,B$: \begin{enumerate}
        \item We are given a multiple alignment $a^{k-1}_1(\hat\sigma_{1:k-1}),\ldots,a^{k-1}_{k-1}(\hat\sigma_{x_{1:k-1}})$ of the sequences $\hat\sigma_{x_1},\ldots,\hat\sigma_{x_{k-1}}$, and a new sequence $\hat\sigma_{x_k}$ that is at most one mutation away from $\hat\sigma_{x_{k-1}}$.
        \item The sequences $\hat\sigma_{x_{k-1}}$ and $\hat\sigma_{x_k}$ satisfy one of the three cases (A), (B) or (C) by assumption, so their alignment $a^{k}_{k-1}(\hat\sigma_{x_{1:k}})$ and $a^{k}_{k}(\hat\sigma_{1:k})$ (within the larger multiple sequence alignment) will differ by at most one entry similarly to the sequence case above. The full alignment is defined as follows:
        \begin{itemize}
            \item If $\hat\sigma_{x_{k-1}} = \hat\sigma_{x_k}$, then set $a^{k}_k(\hat\sigma_{1:k})$ to be equal to $a^{k-1}_{k-1}(\hat\sigma_{1:k-1})$ and $a^{k}_i(\hat\sigma_{1:k})$ to be equal to $a^{k-1}_i(\hat\sigma_{1:k-1})$ for all $i < k$.
            \item If $\hat\sigma_{x_{k-1}}$ and $\hat\sigma_{x_k}$ have equal length and disagree at a single segregating site, 
            set $a^{k}_i(\hat\sigma_{1:k})$ to $a^{k-1}_i(\hat\sigma_{1:k})$ for all $i \leq k-1$.  Each entry of $a^{k}_k(\hat\sigma_{1:k})$ is set to the corresponding entry of $a^{k}_{k-1}(\hat\sigma_{1:k})$, except for the segregating site. If the latter occurs at position $i$ within $a^{k}_{k-1}(\hat\sigma_{1:k})$, then we set $a^{k}_k(\hat\sigma_{1:k})_i$ to $a^{k}_{k-1}(\hat\sigma_{1:k})_i + 1 \ (\textnormal{mod}\ 2)$.
            \item If 
            $\hat\sigma_{x_k}$ has one more site than $\hat\sigma_{x_{k-1}}$, then an insertion has occurred and the inserted site in $\hat\sigma_{x_k}$ cannot be ancestral to any site in $\hat\sigma_{x_1},\ldots,\hat\sigma_{x_{k-1}}$.  So the inserted site in $\hat\sigma_{x_{k}}$ must correspond to a gap in all previous sequences.  More specifically, if the site $\hat\sigma_{\textnormal{ins}}$ is inserted 
            to the left of position 
            $j^{\ast} \in \{1,\ldots,|a^{k-1}_{k-1}(\hat\sigma_{1:k-1})|\}$
            in the $(k-1)$-st sequence \textit{in the previously constructed alignment $a^{k-1}_{k-1}(\hat\sigma_{1:k-1})$} (where $j^{\ast}$ is the minimal such choice)
            then set 
            \begin{align*}
                a^k_k(\hat\sigma_{1:k})_{i} = \begin{cases}
                    a^{k-1}_{k-1}(\hat\sigma_{1:k-1})_{i} & 1 \leq i < j^{\ast} \\
                    \hat\sigma_{\textnormal{ins}} & i = j^{\ast} \\
                    a^{k-1}_{k-1}(\hat\sigma_{1:k-1})_{i-1} & j^{\ast} < i \leq |a^{k-1}_{k-1}(\hat\sigma_{1:k-1})| + 1
                \end{cases}
            \end{align*} 
            and for all $\ell \leq k-1$ 
            \begin{align*}
                a^{k}_\ell(\hat\sigma_{1:k})_{i} = \begin{cases}
                    a^{k-1}_\ell(\hat\sigma_{1:k-1})_{i} & 1 \leq i < j^{\ast}  \\
                    - & i = j^{\ast} \\
                    a^{k-1}_\ell(\hat\sigma_{1:k-1})_{i-1} & j^{\ast} < i \leq |a^{k-1}_{k-1}(\hat\sigma_{1:k-1})| + 1.
                \end{cases}
            \end{align*}
            \item The case where 
            $\hat\sigma_{x_k}$ has one fewer site than $\hat\sigma_{x_{k-1}}$ is handled symmetrically.
        \end{itemize}
    \end{enumerate}
    \item Output the pairwise alignment $(a^{B}_1(\hat\sigma_{1:B}), a^{B}_B(\hat\sigma_{1:B}))$ \emph{after removing all columns with only gaps}.
\end{enumerate}

\subsection{Theoretical guarantee}

We establish the two claims below in the next sections.
\begin{prop}[Correctness of alignment]
\label{prop:correct-align}
Let $T$ be the output of the pre-processing step and let $x_1,\ldots,x_B$ be the resulting backbone vertices. Then the alignment algorithm produces
a true pairwise alignment of $\sigma_{x_1}$ and $\sigma_{x_B}$ provided that:
\begin{enumerate}
    \item (Correctness of ancestral estimation) For $k=2,\ldots,B-1$, 
    $\hat\sigma_{x_k} = \sigma_{x_k}$.
    \item (One-mutation condition) Successive pairs of true backbone sequences
    $$
    \{\sigma_{x_1},\sigma_{x_2}\},\{\sigma_{x_2},\sigma_{x_3}\},\ldots,\{\sigma_{x_{B-1}},\sigma_{x_B}\},
    $$ 
    are at most one mutation away.
\end{enumerate}
\end{prop}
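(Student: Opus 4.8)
The plan is to prove Proposition~\ref{prop:correct-align} by induction on the construction of the multiple sequence alignment, leveraging assumptions (1) and (2) to reduce the problem to verifying that the three-case stepwise procedure correctly tracks homology at every step. Since assumption~(1) guarantees $\hat\sigma_{x_k} = \sigma_{x_k}$ for interior backbone vertices, and since the leaf sequences $\sigma_{x_1} = \sigma_v$ and $\sigma_{x_B} = \sigma_w$ are observed exactly, the inferred sequences fed into the stepwise alignment coincide with the true backbone sequences. Assumption~(2) then ensures every successive pair falls into exactly one of cases (A), (B), (C), so the algorithm never terminates without output. The goal is thus to show that the final pairwise alignment $(a^B_1, a^B_B)$ places two sites in the same column if and only if they are homologous (under the tie-breaking convention fixed in Section~2.2).

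First I would set up the inductive invariant. For each $k$, I claim that the multiple alignment $(a^k_1(\hat\sigma_{1:k}),\ldots,a^k_k(\hat\sigma_{1:k}))$ is a \emph{true} multiple sequence alignment of $\sigma_{x_1},\ldots,\sigma_{x_k}$, meaning that any two sites appearing in the same column are homologous and any two in different columns are non-homologous. The base case $k=2$ is exactly the analysis of cases (A), (B), (C) for two sequences, where correctness is essentially immediate: identical sequences and single-substitution cases preserve all sites column-for-column (homology unchanged), and the single-insertion case introduces a gap in the appropriate row so that the newly inserted site, which by definition has no homolog in the shorter sequence, occupies its own column. The key conceptual point, which I would state carefully, is that a single mutation along the backbone edge from $\sigma_{x_{k-1}}$ to $\sigma_{x_k}$ affects the homology relationship of at most one site, and the three cases exhaust the possibilities for what that single mutation can be.

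For the inductive step, assuming the invariant holds at stage $k-1$, I would track how the new mutation between $\sigma_{x_{k-1}}$ and $\sigma_{x_k}$ propagates through the already-constructed alignment. In case (A) the new row duplicates the previous one, preserving all homology. In case (B) a substitution changes a digit but not the identity of the site, so homology (which is defined \emph{only through substitutions}) is untouched and the column structure is correctly inherited. The substantive case is (C): an insertion in $\sigma_{x_k}$ creates a site with no ancestor among $\sigma_{x_1},\ldots,\sigma_{x_{k-1}}$, so the algorithm is correct to open a fresh column (a gap in every earlier row) at the minimal feasible position $j^\ast$. Here I must verify that the minimal-position convention used in the alignment matches the minimal-position convention fixed for homology in Section~2.2 (repeated sites inserted/deleted at the beginning of a run), so that the column assignment agrees with the declared true alignment; the symmetric deletion case is analogous, with the inserted-then-deleted site corresponding to a gap column that is ultimately removed in the final step. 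Since columns are only added or relabeled but never merge previously distinct sites, the invariant carries forward.

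The main obstacle I anticipate is the bookkeeping in case (C), specifically confirming that the position $j^\ast$ chosen \emph{within the previously constructed alignment} $a^{k-1}_{k-1}(\hat\sigma_{1:k-1})$ (which may already contain gaps) corresponds, after gaps are stripped, to the correct insertion point relative to the raw sequence $\sigma_{x_{k-1}}$, and hence to the homology relationship mandated by the convention. In other words, I need the convention ``minimal $j$ in the gapped coordinate'' to be consistent across all stages and to coincide with the a~priori homology convention, so that inserted sites that happen to land adjacent to identical digits are assigned to columns in a well-defined, convention-respecting way. Once this consistency is established, the final step—outputting $(a^B_1, a^B_B)$ after deleting all-gap columns—simply restricts the verified true $B$-sequence alignment to the two rows of interest; restricting a true alignment to a subset of rows and removing resulting all-gap columns yields a true alignment of those rows, completing the proof.
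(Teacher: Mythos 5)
Your proposal is correct and follows essentially the same route as the paper: the paper gives no standalone proof of Proposition~\ref{prop:correct-align}, instead relying on the case-by-case justification of (A), (B), (C) embedded in the description of the stepwise alignment in Section~\ref{section:stepwise}, which is exactly the argument you formalize as an induction (with the invariant that each partial multiple alignment is true, plus the closing observation that restricting a true multiple alignment to two rows and deleting all-gap columns yields a true pairwise alignment). The one point you flag---consistency of the minimal-position convention in gapped versus raw coordinates with the homology convention of Section~2.2---is indeed the only delicate bookkeeping step, and the paper likewise treats it only implicitly.
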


\section{Correctness of ancestral estimation}
\label{section:ancestral}

In this section, we analyze the ancestral sequence estimation step. The analysis proceeds by coupling the TKF91 process with a percolation process. Roughly, we say that an edge is \textbf{open} if the sequence does not change along it under the TKF91 process.  We will show that, provided the edge lengths are short enough, the open cluster of the root forms a fairly ``dense'' subtree with high probability. The latter property will lead to a correct reconstruction by the Fitch method.  





\subsection{The percolation process}

Consider again the backbone vertices $\{x_k\}_{k=1}^{B}$ and the off-backbone child vertices $\{z_k\}_{k=2}^{B-1}$. For each $k=2,\ldots,B-1$ separately, we couple the sequence evolution process on the subtree descending from $z_k$ with a simpler percolation process. 

We will need some notation. We denote by $T_k$ the subtree of $T$ (after pre-processing) rooted at $z_k$. For two sequences $\sigma, \tau$, 
we let $P_t(\sigma,\tau)$ be the probability under the TKF91 model on an edge
that, started at $\sigma$, the state is $\tau$ after time $t$. Similarly, we let $\tilde{P}_t(\sigma,\tau)$ be the same probability \textit{conditioned on not being at state $\sigma$ at time $t$}.  

It will be convenient to work on an infinite tree. Specifically, let $\overline{T}_k$ be the completion of $T_k$ into an infinite binary tree where new edges have length $0$. We now describe the coupling:
\begin{itemize}
    \item \textbf{The percolation process on $\overline{T}_k$.} We condition on the sequence $\sigma_{z_k}$ at $z_k$, the root of $T_k$. 
For an edge $e$ on $\overline{T}_k$, let $t_e$ be its length and set
$$
\zeta^{(k)}_{e} = P_{t_e}(\sigma_{z_k}, \sigma_{z_k}),
$$
that is, the probability that the sequence does not change along edge $e$ if started at $\sigma_{z_k}$. We then perform percolation
on $\overline{T}_k$ with probabilities $\zeta^{(k)}_e$:
for each edge $e$, it is open independently with probability $\zeta^{(k)}_e$. Let $\mathcal{C}_k$ be the resulting open cluster including $z_k$ (i.e., all vertices of $\overline{T}_k$ that can be reached from $z_k$ using only open edges). 

    \item \textbf{The joint process on $\overline{T}_k$.} For each vertex $v$ in $\mathcal{C}_k$, set $\sigma_v = \sigma_{z_k}$. For each descendant $w$ of a vertex $v \in \mathcal{C}_k$ that is not itself in $\mathcal{C}_k$, assign a sequence to $w$ taken from the conditional distribution $\tilde{P}_{t_e}(\sigma_{z_k},\,\cdot\,)$ where $e = (v,w)$.  
    For each remaining vertex, we run the TKF91 process recursively from the states already assigned. Note that the edges of length $0$ added in the completion of $T_k$ simply entail copying the sequences at the leaves of $T_k$ to all their descendants. 
\end{itemize} 

We will be interested in the properties of the cluster $\mathcal{C}_k$. 
Let $\overline{T}_k^{\mathcal{O}}$ be the subtree of $\overline{T}_{k}$ made of all the vertices in $\mathcal{C}_k$ and the edges connecting them. 
For any $\ell,b \in \mathbb{Z}_{+}$, a rooted tree $T'$ is said~\cite{mossel2001} to be a \textbf{$(b,\ell)$-diluted tree} if: for all $i \in \mathbb{Z}_{+}$, each of the vertices of $T'$ at graph distance $i \ell$ from the root has at least $b$ descendants at graph distance $i\ell + \ell$ from the root.
From the following lemma adapted from \cite{mossel2001}, $\overline{T}_k^{\mathcal{O}}$ is $(2,3)$-diluted with arbitrarily high probability provided edge lengths are short enough. 
\begin{figure}
    \centering
    \includegraphics[scale=0.75]{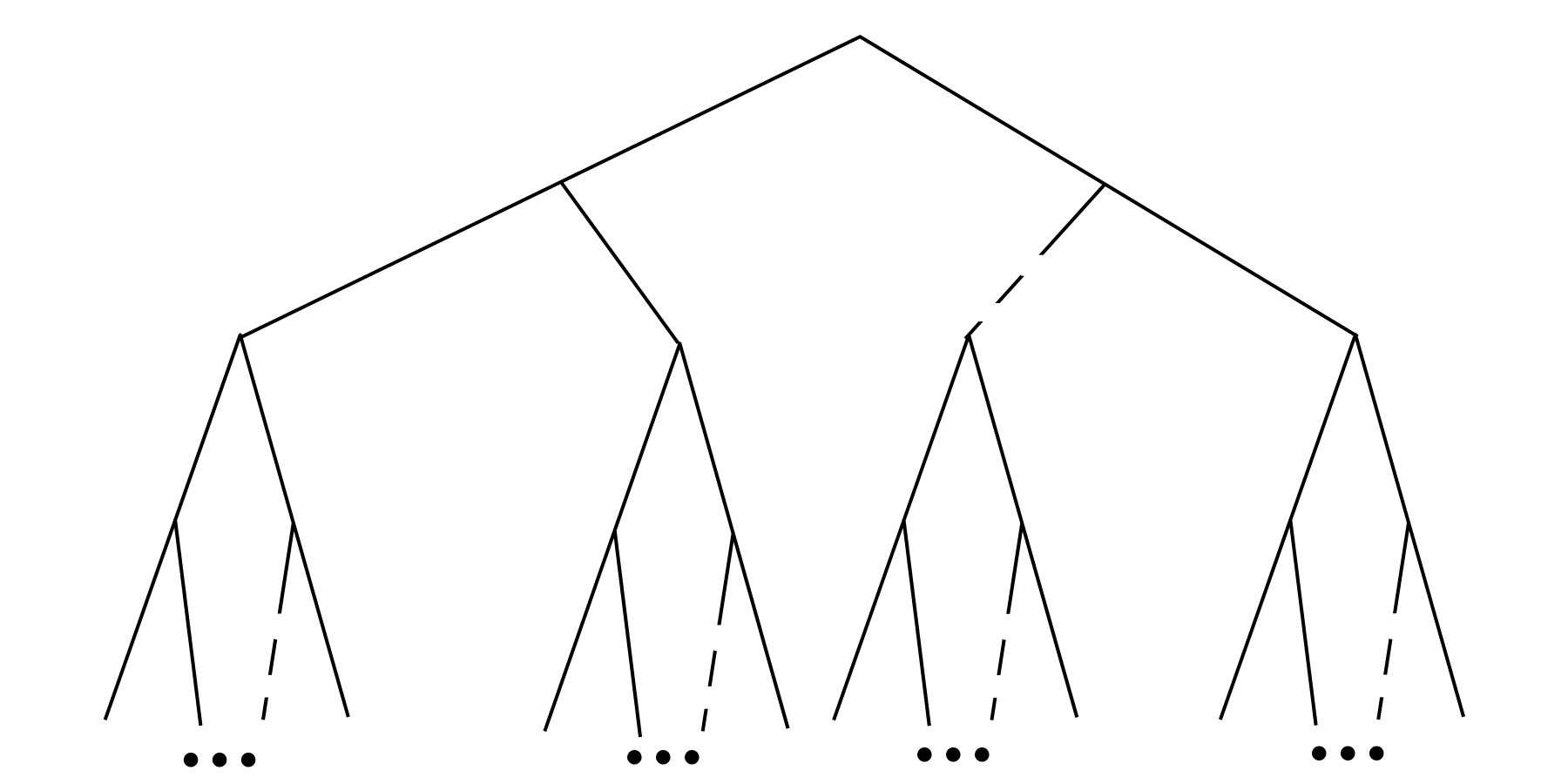}
    \caption{An open $2$-diluted $3$-regular subtree of the infinite binary rooted tree.  Solid lines not descending from any dashed line indicate no mutation.  Dashed lines indicate a mutation may have occurred.  For every vertex in the even generations, at least three of its grandchildren share the same trait.}
    \label{fig:DilutedTree}
\end{figure}
Figure \ref{fig:DilutedTree} depicts the infinite $2$-diluted $3$-regular tree.
 Later in the proof, we will need to condition on the length of $\sigma_{z_k}$ being less than a 
 threshold $\bar{L}$.  Let $\PP^{\bar{L}}$ be the probability measure of the joint process where $\sigma_{z_k}$ is drawn from the stationary distribution of the TKF91 process conditioned on $|\sigma_{z_k}| \leq \bar{L}$.
We first record a simple observation. 
\begin{lemma}[Staying probability]
\label{lemma:staying}
Fix $\bar{L} < +\infty$. 
For any sequence $\sigma$ such that $|\sigma| \leq \bar{L}$ and any $t > 0$, we have
$$
P_t(\sigma, \sigma)
\geq 1 - t (\bar{L}+1)(\mu + \lambda + \eta).
$$
\end{lemma}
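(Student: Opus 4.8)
The plan is to bound the probability that the sequence changes during a short time interval by the total rate of leaving the state $\sigma$, and then apply a standard first-jump estimate for continuous-time Markov chains. Concretely, for a continuous-time Markov process started at $\sigma$, the probability of remaining at $\sigma$ throughout $[0,t]$ is exactly $e^{-\lambda^{\ast}(\sigma)\,t}$, where $\lambda^{\ast}(\sigma)$ is the total exit rate computed in \eqref{eq:lambdastar}. Since $P_t(\sigma,\sigma)$ is at least the probability of never leaving $\sigma$ (there could, in principle, be excursions that return, but we only need a lower bound), we have $P_t(\sigma,\sigma) \geq e^{-\lambda^{\ast}(\sigma)\, t} \geq 1 - \lambda^{\ast}(\sigma)\, t$, using $e^{-x} \geq 1 - x$.

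It then remains to bound $\lambda^{\ast}(\sigma)$ from above under the hypothesis $|\sigma| \leq \bar{L}$. From the explicit formula
\begin{equation*}
\lambda^{\ast}(\sigma) = \lambda(|\sigma| + 1) + \mu |\sigma| + \eta \pi_1 |\sigma|_0 + \eta \pi_0 |\sigma|_1,
\end{equation*}
I would bound each term crudely. The insertion term is $\lambda(|\sigma|+1) \leq \lambda(\bar{L}+1)$; the deletion term is $\mu |\sigma| \leq \mu \bar{L} \leq \mu(\bar{L}+1)$; and since $\pi_0,\pi_1 \leq 1$ and $|\sigma|_0 + |\sigma|_1 = |\sigma|$, the substitution term satisfies $\eta \pi_1 |\sigma|_0 + \eta \pi_0 |\sigma|_1 \leq \eta |\sigma| \leq \eta(\bar{L}+1)$. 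Summing these gives $\lambda^{\ast}(\sigma) \leq (\bar{L}+1)(\lambda + \mu + \eta)$, which when combined with the first-jump bound yields precisely $P_t(\sigma,\sigma) \geq 1 - t(\bar{L}+1)(\mu+\lambda+\eta)$.

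This lemma is essentially a routine computation, so I do not anticipate a genuine obstacle. The only point requiring a little care is justifying the identity (or inequality) $P_t(\sigma,\sigma) \geq e^{-\lambda^{\ast}(\sigma)\, t}$: the cleanest route is to observe that the right-hand side is exactly the probability that the process makes no jump at all before time $t$, which is a lower bound on the event of being at $\sigma$ at time $t$. This relies on $\lambda^{\ast}(\sigma)$ being the correct total exit rate, which is asserted in the excerpt (with the derivation deferred to the appendix), so I would simply cite \eqref{eq:lambdastar}. One should note that $\lambda^{\ast}(\sigma)$ depends only on the composition of $\sigma$ through $|\sigma|$, $|\sigma|_0$, and $|\sigma|_1$, all of which are controlled by $\bar{L}$, so the bound is uniform over all $\sigma$ with $|\sigma| \leq \bar{L}$ as required.
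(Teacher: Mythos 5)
Your proof is correct and follows essentially the same route as the paper: lower bound $P_t(\sigma,\sigma)$ by the no-jump probability $e^{-\lambda^{\ast}(\sigma)t}$, bound $\lambda^{\ast}(\sigma)\leq(\bar{L}+1)(\mu+\lambda+\eta)$ via \eqref{eq:lambdastar}, and apply $e^{-x}\geq 1-x$. Your version simply spells out the term-by-term bound on $\lambda^{\ast}(\sigma)$ that the paper leaves implicit.
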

\begin{proof}
Indeed
$P_t(\sigma, \sigma)$
is lower bounded by the probability 
that no mutation occurs up to time $t$
which, by~\eqref{eq:lambdastar}, is at least
$$
P_t(\sigma, \sigma)
\geq \exp\left(
-(\bar{L}+1)[\mu + \lambda + \eta] t
\right)
\geq 1 - (\bar{L}+1)[\mu + \lambda + \eta] t,
$$
as claimed.
\end{proof}
\begin{lemma}[Existence of an open diluted tree]
\label{lem:Diluted}
For any $\bar{L} \in \mathbb{Z}_+$ and
$\delta_a > 0$, there is $\tmax > 0$ small enough that, if $t_e \leq \tmax$ for all $e$,
$$
\PP^{\bar{L}}[\overline{T}_k^{\mathcal{O}}\ \textnormal{is $(2,3)$-diluted}] 
\geq 1 - \delta_a.
$$
\end{lemma}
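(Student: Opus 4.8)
The plan is to decouple the analysis into (i) a uniform lower bound on the edge-openness probabilities and (ii) a branching recursion on a \emph{homogeneous} percolation that is completely independent of the (unbounded) size of $T_k$. Throughout I work under $\PP^{\bar L}$, so that $|\sigma_{z_k}| \leq \bar{L}$ deterministically. First I would record, via Lemma~\ref{lemma:staying}, that every edge $e$ of $\overline{T}_k$ is open with probability
$$
\zeta^{(k)}_e = P_{t_e}(\sigma_{z_k},\sigma_{z_k}) \geq 1 - t_e(\bar{L}+1)(\mu+\lambda+\eta) \geq 1 - \tmax(\bar{L}+1)(\mu+\lambda+\eta) =: p ,
$$
where the length-$0$ completion edges trivially satisfy $\zeta^{(k)}_e = 1 \geq p$. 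Since $p \to 1$ as $\tmax \to 0$, it suffices to bound the failure probability in terms of $p$ alone. The key point is that, conditionally on $\sigma_{z_k}$, the edges percolate independently and each stays open with probability \emph{at least} $p$, with this bound holding uniformly over all admissible $\sigma_{z_k}$.

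Next I would eliminate the dependence on the size of $T_k$ by monotonicity. The event that the open cluster $\overline{T}_k^{\mathcal{O}}$ contains a $(2,3)$-diluted subtree rooted at $z_k$ is increasing in the set of open edges. Hence, coupling the percolation on $\overline{T}_k$ edge-by-edge with i.i.d.\ bond percolation of parameter $p$ on the same infinite binary tree (so that every edge open in the homogeneous process is open in $\overline{T}_k$), the homogeneous open cluster is contained in $\overline{T}_k^{\mathcal{O}}$. Consequently the failure probability on $\overline{T}_k$ is at most $\beta_p$, the probability that the root cluster of homogeneous parameter-$p$ percolation on the infinite binary tree fails to contain a $(2,3)$-diluted subtree. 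As $\beta_p$ depends on $p$ only, a bound $\beta_p \leq \delta_a$ is automatically uniform over all trees $T$ and all numbers of leaves $n$.

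It then remains to show $\beta_p \to 0$ as $p \to 1$, which I would do by a fixed-point analysis. Let $\alpha_p = 1-\beta_p$ and let $R$ be the number of depth-$3$ descendants of the root joined to it through open edges; its law depends on $p$, and $R \to 8$ in distribution as $p \to 1$. Since the eight depth-$3$ subtrees are edge-disjoint and i.i.d., conditioning on $R$ and on which descendants are connected yields the self-consistent equation
$$
\alpha_p = \E\!\left[\,1 - (1-\alpha_p)^R - R\,\alpha_p(1-\alpha_p)^{R-1}\,\right],
$$
and $\alpha_p$ is the largest fixed point in $[0,1]$ of the increasing map $F_p(\alpha) = \E[\,\P(\mathrm{Bin}(R,\alpha)\geq 2)\,]$, obtained as the decreasing limit of $F_p^{(n)}(1)$ (the probability of a diluted subtree down to super-level $n$). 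To place this fixed point near $1$, fix a small $\epsilon>0$: as $p\to1$ one has $F_p(1-\epsilon)\to 1-\epsilon^7(8-7\epsilon) > 1-\epsilon$, so for $p$ close enough to $1$ we get $F_p(1-\epsilon)>1-\epsilon$ while $F_p(1)=\P(R\geq 2)<1$, forcing a fixed point in $(1-\epsilon,1)$ by the intermediate value theorem. Hence $\beta_p<\epsilon$, and taking $\epsilon\leq\delta_a$ with $\tmax$ correspondingly small proves the claim.

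The heart of the argument, and the step I expect to be the main obstacle, is this recursion: one must carefully justify the conditional independence across the eight depth-$3$ subtrees, identify $\alpha_p$ with the \emph{largest} fixed point rather than a smaller root, and verify the decisive inequality $F_p(1-\epsilon)>1-\epsilon$, which is exactly where the choice $(b,\ell)=(2,3)$ pays off (needing only $2$ of roughly $8$ descendants gives the comparison $8\epsilon^7 \ll \epsilon$). I would also flag a point of interpretation: taken literally, $\overline{T}_k^{\mathcal{O}}$ will invariably have ``thin spots'' and so is not itself $(2,3)$-diluted, so the provable and (for the subsequent Fitch reconstruction analysis) sufficient statement is that $\overline{T}_k^{\mathcal{O}}$ \emph{contains} a $(2,3)$-diluted subtree rooted at $z_k$; the monotone coupling above is precisely tailored to this increasing event.
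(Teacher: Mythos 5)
Your proposal is correct, and its first step coincides exactly with the paper's: under $\PP^{\bar L}$ you invoke Lemma~\ref{lemma:staying} to get the uniform lower bound $\zeta^{(k)}_e \geq 1 - \tmax(\bar L + 1)(\mu+\lambda+\eta) =: p$, with $p \to 1$ as $\tmax \to 0$. Where you diverge is in what happens next: the paper simply cites \cite[Lemma 8]{mossel2001} (noting parenthetically that it extends to edge-dependent percolation probabilities that are uniformly bounded below), whereas you supply a self-contained proof of that percolation fact --- the monotone coupling of the inhomogeneous percolation with homogeneous Bernoulli($p$) bond percolation, followed by the fixed-point analysis of $F_p(\alpha) = \E[\P(\mathrm{Bin}(R,\alpha)\geq 2)]$, the identification of the survival probability with the largest fixed point as the decreasing limit of $F_p^{(n)}(1)$, and the decisive estimate $F_p(1-\epsilon) \to 1 - \epsilon^7(8-7\epsilon) > 1-\epsilon$. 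This is essentially a reproof of the cited lemma; what it buys is that the parenthetical ``straightforward extension'' to varying edge probabilities is made completely explicit (your coupling argument is precisely that extension), at the cost of length. The recursion itself is sound: the eight depth-$3$ subtrees are edge-disjoint from each other and from the top three levels, so conditioning on the configuration of the top levels (which determines $R$) does give the claimed conditional independence. Your closing remark is also well taken: for $p<1$ the full open cluster almost surely has thin spots on an infinite tree, so the statement that is actually provable and actually used (in Lemma~\ref{lemma:fitch-diluted}, via the finite truncation $\overline{\overline{T}}_k^{\mathcal{O}}$) is that the open cluster \emph{contains} a diluted subtree rooted at $z_k$; this matches how the cited result is meant to be read. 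The one caveat worth flagging is that your quantitative step is tuned to the literal $(b,\ell)=(2,3)$ reading of the definition, while the downstream Fitch analysis and Figure~\ref{fig:DilutedTree} really use three out of four grandchildren, i.e.\ $(3,2)$; your argument survives this swap verbatim since $\P(\mathrm{Bin}(4,1-\epsilon)\leq 2) = O(\epsilon^2) \ll \epsilon$, but you should state which convention you are proving.
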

\begin{proof}
By Lemma~\ref{lemma:staying}, for any sequence $\sigma_{z_k}$ such that $|\sigma_{z_k}|\leq \bar{L}$, we have that
$$
\zeta^{(k)}_{e} 
= P_{t_e}(\sigma_{z_k}, \sigma_{z_k})
\geq 1
-(\bar{L}+1)[\mu + \lambda + \eta] \tmax,
$$
where we recall that $t_e \leq \tmax$ by assumption (and that of course includes the added edges of length $0$).
Hence $\zeta^{(k)}_{e}$ can be made arbitrarily close to $1$ (uniformly in $e$) by taking $\tmax$ small enough
(as a function of $\bar{L}$). 
The result then follows directly from \cite[Lemma 8]{mossel2001} (which can be extended in a straightforward manner to the case where percolation probabilities vary across edges but are uniformly bounded).
\end{proof} 

\subsection{Analyzing the Fitch estimator}

Next, we analyze the Fitch estimator 
in the event that $\overline{T}_k$ contains an open $(2,3)$-diluted subtree.  

For any $D \in \mathbb{Z}_+$, let $\overline{T}_{k,D}$ 
be the truncation of $\overline{T}_{k}$ at level $D$, that is, the finite tree obtained by removing all vertices of $\overline{T}_{k}$ at graph distance greater than $D$ from its root. Let $\beta_k$ be the smallest positive integer such that
$\overline{\overline{T}}_k := \overline{T}_{k,2 \beta_k}$ contains all of $T_k$.
Importantly, we make the following observation about the Fitch estimator. 
\begin{lemma}[Fitch estimator on the completion]
\label{lemma:fitch-completion}
The Fitch estimator applied to the leaves of $\overline{\overline{T}}_k$ produces the same ancestral sequence estimate as the Fitch estimator applied to the leaves of $T_k$.
\end{lemma}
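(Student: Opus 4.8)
The plan is to show that the Fitch recursion ``ignores'' the zero-length completion edges, so that the set-valued quantities $\hat{S}_v$ computed on $\overline{\overline{T}}_k$ agree at every vertex of $T_k$ with those computed on $T_k$ itself. The key structural fact is how $\overline{\overline{T}}_k$ is built from $T_k$: it is obtained by appending to each leaf $\ell$ of $T_k$ an entire binary subtree whose every edge has length $0$, and then (since the completion fills out an infinite binary tree truncated at an even level) the original leaves of $T_k$ become internal vertices of $\overline{\overline{T}}_k$. Because the joint process copies the sequence across every length-$0$ edge, all leaves of $\overline{\overline{T}}_k$ descending from a given original leaf $\ell$ of $T_k$ carry exactly the sequence $\sigma_\ell$.

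First I would make precise the observation that the Fitch set at an original leaf $\ell$ of $T_k$ is recovered from the completion. For a vertex $u$ all of whose descendant leaves in $\overline{\overline{T}}_k$ carry the same sequence $\sigma_\ell$, I claim $\hat{S}_u = \{\sigma_\ell\}$. This follows by a short induction up the appended subtree below $\ell$: at the appended leaves $\hat{S} = \{\sigma_\ell\}$ by definition, and whenever two children both have Fitch set $\{\sigma_\ell\}$, their intersection is $\{\sigma_\ell\}\ne\emptyset$, so the parent's Fitch set is also $\{\sigma_\ell\}$ by the intersection branch of the recursion. Applying this up to $\ell$ itself shows that the Fitch set computed at $\ell$ inside $\overline{\overline{T}}_k$ equals $\{\sigma_\ell\}$, which is precisely the initialization the Fitch estimator uses at $\ell$ when run on $T_k$ alone.

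With the leaves matched, the remainder is immediate: the internal structure of $T_k$ sits inside $\overline{\overline{T}}_k$ unchanged (the only modification of the completion is the appending of length-$0$ subtrees below the leaves), so the Fitch recursion on $\overline{\overline{T}}_k$, restricted to the vertices of $T_k$, is literally the same recursion with the same leaf initializations as the run on $T_k$. A second (downward) induction over the vertices of $T_k$, from its leaves to its root $z_k$, then yields $\hat{S}_v^{\,\overline{\overline{T}}_k} = \hat{S}_v^{\,T_k}$ for every $v \in T_k$; in particular $\hat{S}_{z_k}$ agrees, and since the Fitch estimator is a uniformly chosen element of $\hat{S}_{z_k}$ (under a common choice of randomness, or equivalently in distribution), the two estimates coincide.

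The only real subtlety, and the step I would be most careful about, is the bookkeeping of the completion: I must confirm that $\overline{\overline{T}}_k = \overline{T}_{k,2\beta_k}$ with $\beta_k$ minimal containing all of $T_k$ really does attach only zero-length edges at or below the original leaves, so that no original internal edge of $T_k$ is altered and every appended leaf inherits its ancestor leaf's sequence. Once that is pinned down, the argument is purely combinatorial and the probabilistic content (stationarity, the percolation coupling) plays no role here. I would therefore present the lemma as a deterministic statement about the Fitch recursion, valid pointwise for any assignment of leaf sequences, which is also the cleanest form for invoking it later.
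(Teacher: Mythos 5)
Your argument is correct and is essentially the paper's own proof: the paper likewise observes that every leaf of $\overline{\overline{T}}_k$ descending from a leaf $\ell$ of $T_k$ carries $\sigma_\ell$ (since the completion edges have length $0$ and the process copies sequences across them), so the Fitch recursion returns $\hat{S}_\ell = \{\sigma_\ell\}$ at $\ell$ and the remainder of the recursion on $T_k$ is unchanged. You simply make explicit the two inductions that the paper leaves implicit.
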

\begin{proof}
All leaves $\bar{\bar{\ell}}$ of $\overline{\overline{T}}_k$ descending from a leaf $\ell$ of $T_k$ satisfy $\sigma_{\bar{\bar{\ell}}} = \sigma_\ell$, so by definition of the Fitch estimator $\hat{S}_\ell = \sigma_\ell$. The claim follows. 
\end{proof}

Let $\overline{\overline{T}}_k^{\mathcal{O}}$ be
the truncation of $\overline{T}_k^{\mathcal{O}}$ at level $2 \beta_k$.
\begin{lemma}[Fitch estimator in the presence of an open diluted tree]
\label{lemma:fitch-diluted}
If $\overline{\overline{T}}_k^{\mathcal{O}}$ 
is $(2,3)$-diluted, then the Fitch estimator $\hat{\sigma}_{z_k}$ over the tree $\overline{\overline{T}}_k$ 
equals the true sequence $\sigma_{z_k}$ 
at $z_k$.
\end{lemma}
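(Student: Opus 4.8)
The plan is to prove the stronger statement that the Fitch set at the root is the singleton $\{\sigma_{z_k}\}$; since the Fitch estimator is a uniform draw from $\hat{S}_{z_k}$, this immediately forces $\hat\sigma_{z_k} = \sigma_{z_k}$. Write $s^\ast := \sigma_{z_k}$. The starting observation is that every vertex of the open cluster carries the sequence $s^\ast$, so in particular every leaf of $\overline{\overline{T}}_k$ that lies in the cluster has Fitch set $\{s^\ast\}$ by definition. The goal is therefore to propagate this singleton up the cluster to the root.

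I would argue by downward induction on the even generations, with the invariant: every cluster vertex at an even generation has Fitch set exactly $\{s^\ast\}$. The crux is a two-generation propagation step that exploits the diluted structure, which guarantees (see Figure~\ref{fig:DilutedTree}) that each cluster vertex $v$ at an even generation has at least three of its (at most four) grandchildren in the cluster, each of which, by the induction hypothesis, has Fitch set $\{s^\ast\}$. I would combine this with two elementary facts about the Fitch recursion: (i) if both children of a vertex have Fitch set $\{s^\ast\}$, then so does the vertex, since the intersection is nonempty; and (ii) if one child has Fitch set $\{s^\ast\}$ and the other is arbitrary, then the vertex's Fitch set still contains $s^\ast$ (if $s^\ast$ lies in the other child's set the intersection is $\{s^\ast\}$, and otherwise the union is taken and contains $s^\ast$).

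Given these, the propagation goes by pigeonhole on how the at least three good grandchildren split between the two children $c_1,c_2$ of $v$: at least one child, say $c_1$, has both of its children good, so by (i) $\hat{S}_{c_1} = \{s^\ast\}$; the remaining child $c_2$ then has at least one good child, so by (ii) $s^\ast \in \hat{S}_{c_2}$. Applying the intersection rule once more at $v$, we get $\hat{S}_{c_1}\cap \hat{S}_{c_2} = \{s^\ast\} \neq \emptyset$, whence $\hat{S}_v = \{s^\ast\}$, closing the induction. The base case is the deepest even generation reached by the truncation at level $2\beta_k$: there the relevant cluster vertices are leaves of $\overline{\overline{T}}_k$, so their Fitch sets are $\{s^\ast\}$ outright. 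Applying the invariant at generation $0$ yields $\hat{S}_{z_k} = \{s^\ast\}$, as desired.

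I expect the main obstacle to be precisely this two-generation propagation step, and in particular verifying that it is the \emph{branching}, not merely the connectivity, of the cluster that makes it go through. The danger is the union branch of the Fitch recursion, which can let an incorrect state survive and even overtake $s^\ast$ when $s^\ast$ is under-represented: a single chain of cluster vertices, or two good descendants funneled through one child, is easily defeated by adversarial off-cluster subtrees carrying a common wrong label. The guarantee of at least three of four good grandchildren is exactly what rules this out through the pigeonhole above, forcing one child to be a clean $\{s^\ast\}$ and the other to retain $s^\ast$, so that the intersection rule---rather than the union rule---fires at $v$.
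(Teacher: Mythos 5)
Your proof is correct and follows essentially the same route as the paper's: an induction over two-generation blocks in which the guarantee that at least three of the four grandchildren lie in the open cluster forces, via pigeonhole, one child to have Fitch set $\{\sigma_{z_k}\}$ and the other to retain $\sigma_{z_k}$, so the intersection rule fires at the top. The paper phrases this as induction on $\beta_k$ rather than downward induction on even generations, but the content of the inductive step is identical.
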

\begin{proof}
We prove this claim by induction on $\beta_k$.  We start with the $\beta_k = 1$ case.  Then $\overline{\overline{T}}_k$ 
consists of $z_k$, two children $z_k^1$ and $z_k^2$, and the grandchildren $z_k^{1,1},z_k^{1,2},z_k^{2,1},z_k^{2,2}$.  If all four grandchildren belong to $\overline{\overline{T}}_k^{\mathcal{O}}$, 
then we are done.  
The other case, without loss of generality, is $z_k^{2,1} \notin \overline{\overline{T}}_k^{\mathcal{O}}$.  Then $\sigma_{z_k^{2,1}} \ne \sigma_{z_k^{2,2}} = \sigma_{z_k}$, so the Fitch method gives $\hat{S}_{z_k^{2}} = \hat{S}_{z_k^{2,1}} \cup \hat{S}_{z_k^{2,2}} = \{\sigma_{z_k^{2,1}},\sigma_{z_k}\}$.  Since $\sigma_{z_k^{1,1}} = \sigma_{z_k^{1,2}} = \sigma_{z_k}$, we have $\hat{S}_{z_k^{1}} = \{\sigma_{z_k}\}$.  Continuing on, we have $\hat{S}_{z_k} = \hat{S}_{z_k^{1}} \cap \hat{S}_{z_k^2} = \{\sigma_{z_k}\}$.  Since $\hat{S}_{z_k}$ contains only the state $\sigma_{z_k}$, the Fitch method is guaranteed to return $\sigma_{z_k}$.

Now, we assume the $r$-th case holds for $r \geq 1$ and we show that the $(r+1)$-st case holds as well.  As before, consider the four grandchildren of $z_k$ and the same cases.  If all four grandchildren belong to $\overline{\overline{T}}_k^{\mathcal{O}}$, 
then they are each the root of a subtree of $2r$ levels with root state equal to $\sigma_{z_k}$.  The induction assumption implies that the Fitch method returns $\sigma_{z_k}$ as estimates for $\sigma_{z_k^{i,j}}, i,j \in \{1,2\}$.  The Fitch method then returns $\hat{S}_{z_k} = \{\sigma_{z_k}\}$, as required.  
For the other case when $z_k^{2,1} \notin \overline{\overline{T}}_k^{\mathcal{O}}$, 
we know only that $\hat{S}_{z_k^{2,1}}$ 
is an arbitrary set of sequences.  
If $\sigma_{z_k} \in \hat{S}_{z_k^{2,1}}$, 
then $\hat{S}_{z_k^{2}} = \hat{S}_{z_k^{2,1}} \cap \hat{S}_{z_k^{2,2}} = \{\sigma_{z_k}\}$, 
and we are done. 
Else, we have $\hat{S}_{z_k^2} = \hat{S}_{z_k^{2,1}} \cup \hat{S}_{z_k^{2,2}}$, where $\sigma_{z_k} \in \hat{S}_{z_k^{2,2}}$ so that $\hat{S}_{z_k} = \hat{S}_{z_k^1} \cap \hat{S}_{z_k^2} = \{\sigma_{z_k}\}$, as required.  
This completes the proof for the $(r+1)$-st case, 
and hence of the lemma.  
\end{proof}

Combining Lemmas~\ref{lem:Diluted}, \ref{lemma:fitch-completion}, 
and 
\ref{lemma:fitch-diluted},
we get the following.
\begin{prop}[Correctness of ancestral estimation off the backbone]
\label{prop:Fitch2}
For any $\bar{L} \in \mathbb{Z}_+$ and
$\delta_a > 0$, there is $\tmax > 0$ small enough that,
under $\PP^{\bar{L}}$, the Fitch estimator
on $T_k$ returns the correct ancestral state $\hat\sigma_{z_k} = \sigma_{z_k}$ with probability
at least $1-\delta_a$.
\end{prop}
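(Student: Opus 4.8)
The plan is to chain together the three preceding lemmas through a short sequence of reductions, with all of the randomness ultimately controlled by the percolation estimate of Lemma~\ref{lem:Diluted}. First I would fix the tree $T$, and with it $T_k$, $z_k$, and the deterministic truncation level $2\beta_k$, and work throughout under $\PP^{\bar L}$, so that $|\sigma_{z_k}| \leq \bar L$. The opening reduction replaces the Fitch estimator on the true subtree $T_k$ by the Fitch estimator on the completed, truncated tree $\overline{\overline{T}}_k$: by Lemma~\ref{lemma:fitch-completion} these two estimators always coincide, because the length-$0$ edges added in the completion merely copy leaf sequences downward and hence leave every Fitch set $\hat{S}_\ell$ unchanged.

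Next I would isolate the good event $\mathcal{G} = \{\overline{T}_k^{\mathcal{O}}\text{ is }(2,3)\text{-diluted}\}$. On $\mathcal{G}$, the truncation $\overline{\overline{T}}_k^{\mathcal{O}}$, which is exactly $\overline{T}_k^{\mathcal{O}}$ cut at level $2\beta_k$, inherits the diluted property up to that level, since the defining descendant-count condition at the relevant levels is unaffected by removing deeper vertices. Thus on $\mathcal{G}$ the hypothesis of Lemma~\ref{lemma:fitch-diluted} holds, and the Fitch estimator over $\overline{\overline{T}}_k$ returns $\sigma_{z_k}$. Combining this with the opening reduction, on $\mathcal{G}$ the Fitch estimator on $T_k$ satisfies $\hat\sigma_{z_k} = \sigma_{z_k}$.

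It then remains to lower-bound $\PP^{\bar L}[\mathcal{G}]$, which is precisely the content of Lemma~\ref{lem:Diluted}: given $\bar L$ and $\delta_a$, there is $\tmax > 0$ small enough that, whenever all edge lengths satisfy $t_e \leq \tmax$, one has $\PP^{\bar L}[\mathcal{G}] \geq 1 - \delta_a$. Taking this same $\tmax$ and combining with the previous paragraph yields $\PP^{\bar L}[\hat\sigma_{z_k} = \sigma_{z_k}] \geq \PP^{\bar L}[\mathcal{G}] \geq 1 - \delta_a$, as required. Note that the quantifier structure of the proposition matches Lemma~\ref{lem:Diluted} verbatim, so the threshold carries over without any further optimization.

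The two reductions are routine; the only point demanding care is the passage from the infinite open cluster $\overline{T}_k^{\mathcal{O}}$ appearing in Lemma~\ref{lem:Diluted} to the finite truncation $\overline{\overline{T}}_k^{\mathcal{O}}$ required by Lemma~\ref{lemma:fitch-diluted}. I would verify that dilutedness of a tree descends to all of its truncations at the relevant even levels, and that the truncation depth $2\beta_k$ is a deterministic function of the fixed tree rather than of the random process, so that the single $\tmax$ supplied by Lemma~\ref{lem:Diluted} suffices uniformly. I expect no genuine obstacle here, since all the substantive work---the branching/percolation estimate and the inductive Fitch analysis---has already been discharged in the lemmas.
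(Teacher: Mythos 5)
Your proposal is correct and follows essentially the same route as the paper, which proves Proposition~\ref{prop:Fitch2} simply by combining Lemmas~\ref{lem:Diluted}, \ref{lemma:fitch-completion}, and \ref{lemma:fitch-diluted} exactly as you do. Your added care about the passage from $\overline{T}_k^{\mathcal{O}}$ to its truncation $\overline{\overline{T}}_k^{\mathcal{O}}$ and about $2\beta_k$ being deterministic is a reasonable filling-in of details the paper leaves implicit.
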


\section{One-mutation condition}
\label{section:one-mutation}

In this section, we establish the one-mutation
condition required by Proposition~\ref{prop:correct-align}
and use it to finish the proof of the main result.

\subsection{A bound on the transition probabilities}

We will need a bound on the probability that 
at most one mutation occurs on an edge
along the backbone. Because the state space
of the sequence process is infinite, the rates are unbounded and we state
the next bound explicitly in terms of the length of the sequence
at the start of the edge. Later on, we will use the fact that the length is stationary to control it.
\begin{lemma}[At most one mutation]
\label{lemma:atmostone}
Fix $\bar{L} < +\infty$. 
For any sequence $\sigma$ such that $|\sigma| \leq \bar{L}$ and any $t > 0$, we have
$$
P_t(\sigma, Y_\sigma)
\geq 1 - \left\{t (\bar{L}+2)[\mu+\lambda+\eta]\right\}^2,
$$
where $Y_{\sigma} = \{\sigma\} \cup \mathcal{S}_1(\sigma)$ are the sequences at most one mutation away from $\sigma$.
\end{lemma}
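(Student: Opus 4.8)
The plan is to reduce the statement to a bound on the probability that the TKF91 jump chain started at $\sigma$ makes at least two jumps during $[0,t]$. Writing $N_t$ for the number of mutation events up to time $t$, observe that if $N_t = 0$ the state at time $t$ equals $\sigma$, and if $N_t = 1$ it lies in $\mathcal{S}_1(\sigma)$; in either case the state belongs to $Y_\sigma = \{\sigma\}\cup\mathcal{S}_1(\sigma)$. Hence $P_t(\sigma, Y_\sigma) \geq \PP[N_t \leq 1] = 1 - \PP[N_t \geq 2]$, and it suffices to show that $\PP[N_t \geq 2] \leq \{t(\bar{L}+2)[\mu+\lambda+\eta]\}^2$.

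First I would control the escape rates encountered before the second jump. Using formula~\eqref{eq:lambdastar} together with $\pi_0,\pi_1 \leq 1$, the total rate of leaving any state satisfies $\lambda^{\ast}(\sigma) \leq (|\sigma|+1)(\lambda+\mu+\eta)$. Since $|\sigma| \leq \bar{L}$, the first holding time is exponential with rate $\lambda^{\ast}(\sigma) \leq (\bar{L}+1)(\lambda+\mu+\eta)$. After a single mutation the state $\tau$ lies in $\mathcal{S}_1(\sigma)$ and has length $|\tau| \leq |\sigma|+1 \leq \bar{L}+1$ (a deletion decreases the length, a substitution leaves it unchanged, and an insertion increases it by one), so the second holding time is exponential with rate $\lambda^{\ast}(\tau) \leq (\bar{L}+2)(\lambda+\mu+\eta) =: r$. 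Thus both inter-jump rates relevant to the event $\{N_t \geq 2\}$ are bounded by $r$; crucially, we never need to control rates beyond the second jump, where the length, and hence the escape rate, may no longer be bounded.

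Next I would compare with a rate-$r$ Poisson clock. Since each of the first two holding times is exponential with rate at most $r$ (conditionally on the preceding state), each stochastically dominates an independent $\mathrm{Exp}(r)$ variable, so the time of the second jump stochastically dominates $E_1 + E_2$ with $E_1, E_2$ i.i.d.\ $\mathrm{Exp}(r)$. Therefore
$$
\PP[N_t \geq 2] \leq \PP[E_1 + E_2 \leq t] = 1 - e^{-rt}(1+rt),
$$
the right-hand side being exactly $\PP[\mathrm{Poisson}(rt) \geq 2]$. A Taylor expansion gives $1 - e^{-x}(1+x) = x^2/2 - x^3/3 + \cdots \leq x^2/2 \leq x^2$ for $x \geq 0$, so with $x = rt = t(\bar{L}+2)(\mu+\lambda+\eta)$ we obtain $\PP[N_t \geq 2] \leq (rt)^2$, which is the desired bound.

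I expect the only real subtlety to be making the stochastic-domination step rigorous on an infinite state space with unbounded rates: one must phrase the coupling so that it involves only the first two holding times, where the sequence length is provably at most $\bar{L}+1$, rather than attempting to dominate the entire process by a single Poisson process. Once the coupling is restricted in this way, the remaining estimates are the elementary Gamma/Poisson computation above, and the result refines the cruder first-moment bound of Lemma~\ref{lemma:staying} to second order in $t$.
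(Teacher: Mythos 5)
Your proposal is correct and follows essentially the same route as the paper: both reduce the claim to bounding the probability that the second jump time $T_2$ occurs by time $t$, via the key observation that the escape rates encountered before the second jump are at most $(\bar{L}+2)(\mu+\lambda+\eta)$ because the sequence length after one mutation is at most $\bar{L}+1$. The only difference is in the final computation---the paper integrates explicitly over the first jump time and bounds the result by $\left(1-e^{-rt}\right)^2 \leq (rt)^2$, while you dominate $T_2$ by a sum of two i.i.d.\ $\mathrm{Exp}(r)$ variables to get the (slightly sharper) $(rt)^2/2$---which is a cosmetic variation, not a different argument.
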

\begin{proof}
For a TKF91 process on an edge started at $\sigma$, let $X_s \in \mathcal{S}$ be the sequence observed at time $s \in [0,t]$ and $T_i$ be the time of the $i$th jump from one state to another state. Then
$$
P_t(\sigma, Y_\sigma)
\geq \PP_\sigma[T_2 > t],
$$
as the event on the right-hand side guarantees a single 
jump, which in turn guarantees that $X_t \in Y_\sigma$. Here $\PP_\sigma$ indicates that the
 edge process is started at $\sigma$. Letting 
$$
f_{T_1|\sigma}(s) 
= \lambda^{\ast}(\sigma) \exp\left(-s \lambda^{\ast}(\sigma)\right),
\qquad
F_{T_1|\sigma}(s) 
= 1 - \exp\left(-s \lambda^{\ast}(\sigma)\right),
$$
be the probability density function and cumulative distribution function of the time of the first jump
started at $\sigma$, we get by the strong Markov property
\begin{align*}
\PP_\sigma[T_2 \leq t]
&= \int_{0}^t
f_{T_1|\sigma}(s)
\sum_{\tau \in \mathcal{S}_1(\sigma)}
\frac{Q(\sigma,\tau)}{\lambda^{\ast}(\sigma)}
F_{T_1|\tau}(t-s) \,\mathrm{d} s\\
&\leq \int_{0}^t
\lambda^{\ast}(\sigma) \exp\left(-s \lambda^{\ast}(\sigma)\right)
\max_{\tau \in \mathcal{S}_1(\sigma)}
\left\{1- \exp\left(-(t-s) \lambda^{\ast}(\tau)\right)\right\}
\,\mathrm{d} s.
\end{align*}
Under the assumption that $|\sigma| \leq \bar{L}$,
it holds that $|\tau| \leq \bar{L}+1$
for any $\tau \in \mathcal{S}_1(\sigma)$,
and hence $\max\{\lambda^{\ast}(\sigma), \lambda^{\ast}(\tau)\} \leq (\bar{L}+2)[\mu+\lambda+\eta]$. Continuing on,
the last line in the previous display is
\begin{align*}
&\leq \left\{1- \exp\left(-t (\bar{L}+2)[\mu+\lambda+\eta]\right)\right\}
\int_{0}^t
\lambda^{\ast}(\sigma) \exp\left(-s \lambda^{\ast}(\sigma)\right)
\,\mathrm{d} s\\
&= \left\{1- \exp\left(-t (\bar{L}+2)[\mu+\lambda+\eta]\right)\right\}
\left\{1 - \exp\left(-t \lambda^{\ast}(\sigma)\right)\right\}\\
&\leq \left\{t (\bar{L}+2)[\mu+\lambda+\eta]\right\}^2,
\end{align*}
establishing the claim.
\end{proof}

\subsection{Union bound over the backbone}


We define a number of events whose joint occurrence
guarantees the success of our alignment procedure:
\begin{itemize}
    \item \textit{(One-mutation condition)} For $o = -,+$ and $k = 1,\ldots,B^o-1$, let $F^o_k$ be the event that the sequences at $\tilde{x}^o_k$ and the backbone child vertex of $\tilde{x}^o_k$ (i.e., $\tilde{x}^o_{k+1}$) satisfy constraints (A), (B), or (C) from Section~\ref{section:stepwise}.  

    \item \textit{(Ancestral reconstruction)} For $o = -,+$ and $k = 1,\ldots,B^o-1$, let $G^o_k$ be the event that there is no mutation between the sequences at $\tilde{x}^o_k$ and its off-backbone child vertex $\tilde{z}^o_k$ \textit{and} that $\sigma_{\tilde{z}^o_k}$ is correctly reconstructed by applying the Fitch method on the subtree rooted at $\tilde{z}^o_k$.
    
    \item \textit{(Root segment)} For $o = -,+$, let $H^o$ be the event that the sequences at the root and at $\tilde{x}^o_1$ are identical.
    
\end{itemize}

The following proposition provides a requirement on the maximum branch length $\tmax$ for all the above events to occur simultaneously. Define the bad event
$$
\mathcal{B} 
= (H^-)^c \cup (H^+)^c \cup \left\{\bigcup_{o=-,+} \bigcup_{k=1}^{B^o-1} (F^o_k)^c \cup (G^o_k)^c\right\}.
$$
Recall that the pre-processing procedure has a  parameter $\delta_1$. 
\begin{prop}[Union bound over the backbone]
\label{prop:Intersection}
Fix a tree height $h > 0$. For any $0 < \delta_1 < h$, there is a $\tmax$ small enough
that
$$
\PP\left[ \mathcal{B} \right] \leq C h \delta_1 \log^2(\delta_1^{-1}),
$$ 
where $C$ is a constant depending only on $\lambda,\mu,\eta$.
\end{prop}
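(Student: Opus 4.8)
The plan is to establish the bound by a union bound over the three families of events $H^o$, $F^o_k$, $G^o_k$, after first controlling the sequence lengths along the backbone. Two quantities drive the estimate: the number of backbone vertices and a length threshold $\bar{L}$. Since the pre-processing guarantees that consecutive backbone edges have length in $[\delta_1, 2\delta_1]$, each side carries at most $B^o \leq h/\delta_1 + O(1)$ vertices, so the total number of events appearing in $\mathcal{B}$ is $O(h/\delta_1)$.

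First I would fix a threshold $\bar{L} = \Theta(\log(\delta_1^{-1}))$ and introduce the event $\mathcal{L}$ that every relevant sequence---at each backbone vertex $\tilde{x}^o_k$ and each off-backbone child $\tilde{z}^o_k$---has length at most $\bar{L}$. Because the stationary length distribution $\gamma_M$ from~\eqref{eq:LengthStationary} has geometric tails, $\PP[|\sigma| > \bar{L}] \leq (\lambda/\mu)^{\bar{L}}$, and a union bound over the $O(h/\delta_1)$ relevant vertices gives $\PP[\mathcal{L}^c] = O((h/\delta_1)(\lambda/\mu)^{\bar{L}})$. Taking the constant in $\bar{L}$ large enough (depending on $\lambda/\mu$) makes this at most the target bound, so it suffices to control $\PP[\mathcal{B} \cap \mathcal{L}]$.

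On $\mathcal{L}$ each relevant starting sequence satisfies the length hypothesis of Lemmas~\ref{lemma:staying} and~\ref{lemma:atmostone}, so I would bound each event by conditioning on its starting state and applying the corresponding transition estimate. For the one-mutation events $F^o_k$, whose backbone edges have length at most $2\delta_1$, Lemma~\ref{lemma:atmostone} gives $\PP[(F^o_k)^c \cap \{|\sigma_{\tilde{x}^o_k}| \leq \bar{L}\}] \leq \{2\delta_1(\bar{L}+2)(\mu+\lambda+\eta)\}^2 = O(\delta_1^2 \bar{L}^2)$; summing over the $O(h/\delta_1)$ such events yields $O(h\delta_1 \bar{L}^2) = O(h\delta_1 \log^2(\delta_1^{-1}))$, which is the dominant contribution. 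For the two root events $H^o$ (again edges of length at most $2\delta_1$) Lemma~\ref{lemma:staying} gives $O(\delta_1 \bar{L})$ each. For the reconstruction events $G^o_k$, I would split $(G^o_k)^c$ into the event of a mutation on the short edge $(\tilde{x}^o_k, \tilde{z}^o_k)$---bounded by $O(\tmax \bar{L})$ via Lemma~\ref{lemma:staying}---and the event of Fitch failure, which by Proposition~\ref{prop:Fitch2} has probability at most $\delta_a$ under the measure $\PP^{\bar{L}}$ obtained by conditioning on $|\sigma_{\tilde{z}^o_k}| \leq \bar{L}$; summing gives $O((h/\delta_1)(\tmax \bar{L} + \delta_a))$.

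Finally I would collect the contributions and choose $\tmax$ (and the Fitch parameter $\delta_a$) small enough, as a function of $h$, $\delta_1$, and $\bar{L}$, that both the $G^o_k$ contribution $O((h/\delta_1)(\tmax \bar{L} + \delta_a))$ and the $H^o$ contribution $O(\delta_1 \bar{L})$ are dominated by the $F^o_k$ term $O(h\delta_1 \log^2(\delta_1^{-1}))$; together with $\PP[\mathcal{L}^c]$ this yields the claimed bound. The main obstacle is the calibration of $\bar{L}$: it must be large enough that the geometric length tail summed over all $O(h/\delta_1)$ vertices is negligible, yet the factor $\bar{L}^2$ entering the one-mutation estimate is precisely what produces the $\log^2(\delta_1^{-1})$ in the final bound, so these two constraints must be reconciled to land exactly on the stated form.
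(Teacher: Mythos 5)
Your proposal follows essentially the same route as the paper: truncate sequence lengths at $\bar{L}=\Theta(\log\delta_1^{-1})$ using the geometric tail of \eqref{eq:LengthStationary}, bound each family of events via Lemma~\ref{lemma:staying}, Lemma~\ref{lemma:atmostone} and Proposition~\ref{prop:Fitch2}, count $B^o=O(h/\delta_1)$ backbone vertices, and let the $F^o_k$ terms supply the dominant $h\delta_1\log^2(\delta_1^{-1})$ contribution. The only substantive deviations are cosmetic (a single global length event $\mathcal{L}$ versus the paper's per-event conditioning, each costing $O((h/\delta_1)\delta_1^2)$ overall) except for one slip: you bound the root edges in $H^o$ by $2\delta_1$, giving a contribution $O(\delta_1\bar{L})=O(\delta_1\log\delta_1^{-1})$, which is \emph{not} uniformly dominated by $Ch\delta_1\log^2(\delta_1^{-1})$ with $C$ depending only on the rates (take $h\to 0$ with $\delta_1\asymp h$, so that $h\log\delta_1^{-1}\to 0$), and it cannot be rescued by shrinking $\tmax$ since your bound for it does not involve $\tmax$. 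The fix is the one the paper uses: the children of the root survive pre-processing, so the root edges are original edges of length at most $\tmax$, and Lemma~\ref{lemma:staying} then gives $O(\tmax\bar{L})$ for each $H^o$, which your final choice of $\tmax$ (e.g.\ $\tmax\leq\delta_1^2$, as in the paper) makes negligible. With that correction the argument matches the paper's proof.
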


\begin{proof}
We take a union bound over the events making up $\mathcal{B}$.

\paragraph{Controlling the lengths} For each event, we first apply the law of total probability to control for the length of the starting sequence as follows. Suppose that sequence $\tau$ is stationary,
which we denote by $\tau \sim \Pi$. Using the stationary distribution for the length (i.e.,~\eqref{eq:LengthStationary}), we have 
$$
\PP_{\tau \sim \Pi} \left[|\tau| > \bar{L}\right] 
= \sum_{M=\bar{L}+1}^{\infty} 
\left(1 - \frac{\lambda}{\mu}\right) \left(\frac{\lambda}{\mu}\right)^{M} 
= \left(\frac{\lambda}{\mu}\right)^{\bar{L} + 1}.
$$  
The expression on the right is made less than $\delta_1^2$ by choosing 
\begin{equation}\label{eq:barLdef}
\bar{L} 
= \bigg\lceil \frac{\log(\delta_1^{2})}{\log(\lambda/\mu)}\bigg\rceil
\leq C' \log(\delta_1^{-1}),
\end{equation}
for a constant $C' > 0$ depending only on $\mu, \lambda$,
where recall that $\mu > \lambda$.
Then for any event $\mathcal{E}$ which depends on
$\tau$, we can write
\begin{align}
\PP[\mathcal{E}]
&= \PP[\mathcal{E}\,|\,|\tau| \leq \bar{L}] \,\PP[|\tau| \leq \bar{L}]
+ \PP[\mathcal{E}\,|\,|\tau| > \bar{L}] 
\,\PP[|\tau| > \bar{L}]\nonumber\\
&\leq \PP[\mathcal{E}\,|\,|\tau| \leq \bar{L}]
+ \PP[|\tau| > \bar{L}]\nonumber\\
&\leq \PP[\mathcal{E}\,|\,|\tau| \leq \bar{L}]
+ \delta_1^2,\label{eq:controllingLength}
\end{align}
for the choice of $\bar{L}$ above.

\paragraph{Events $H^o$}
For $o = -,+$, we use Lemma~\ref{lemma:staying}
to bound the probability of $(H^o)^c$. By construction,
$\tilde{x}^o_1$ is a child of the root, so the edge length between the root and $\tilde{x}^o_1$ is at most $\tmax$. Hence, using Lemma~\ref{lemma:staying} and~\eqref{eq:controllingLength} with $\tau := \sigma_\rho$ and $\mathcal{E} := (H^o)^c$,
we get
\begin{equation}
\label{eq:boundH}
\PP[(H^o)^c]
\leq \tmax (\bar{L}+1)(\mu + \lambda + \eta)
+ \delta_1^2.
\end{equation}

\paragraph{Events $G^o_k$}
For $o = -,+$ and $k = 1,\ldots,B^o-1$, we use
Lemma~\ref{lemma:staying} together with
Proposition~\ref{prop:Fitch2} to bound
the probability of
$(G^o_k)^c$. Here we take
$\tau := \sigma_{\tilde{x}^o_k}$ and $\mathcal{E} := (G^o)^c$. By assumption, the edge length between $\tilde{x}^o_k$ and its off-backbone child $\tilde{z}^o_k$ is at most $\tmax$.
Further, for any fixed failure probability $\delta_a > 0$ and length threshold $\bar{L}$, the maximum branch length $\tmax$ can be taken small enough for Proposition~\ref{prop:Fitch2} to hold.  
By~\eqref{eq:controllingLength},
we get
\begin{align*}
\PP[(G^o_k)^c]
&\leq \PP[(G^o_k)^c\,|\,|\sigma_{\tilde{x}^o_k}| \leq \bar{L}] + \delta_1^2\\
&\leq \PP\left[\{\sigma_{\tilde{x}^o_k} \neq \sigma_{\tilde{z}^o_k}\}
\bigcup
\left\{\{\sigma_{\tilde{x}^o_k} = \sigma_{\tilde{z}^o_k}\} \cap \{\hat\sigma_{\tilde{z}^o_k} \neq \sigma_{\tilde{z}^o_k}\}\right\}
\,\middle|\,|\sigma_{\tilde{x}^o_k}| \leq \bar{L}\right] + \delta_1^2\\
&\leq \PP\left[\{\sigma_{\tilde{x}^o_k} \neq \sigma_{\tilde{z}^o_k}\}
\,\middle|\,|\sigma_{\tilde{x}^o_k}| \leq \bar{L}\right]\\
&\qquad + \PP\left[
\left\{\{\sigma_{\tilde{x}^o_k} = \sigma_{\tilde{z}^o_k}\} \cap \{\hat\sigma_{\tilde{z}^o_k} \neq \sigma_{\tilde{z}^o_k}\}\right\}
\,\middle|\,|\sigma_{\tilde{x}^o_k}| \leq \bar{L}\right] + \delta_1^2
\end{align*}
We use the Markov property to bound
the second term as follows:
\begin{align*}
&\PP\left[
\left\{\{\sigma_{\tilde{x}^o_k} = \sigma_{\tilde{z}^o_k}\} \cap \{\hat\sigma_{\tilde{z}^o_k} \neq \sigma_{\tilde{z}^o_k}\}\right\}
\,\middle|\,|\sigma_{\tilde{x}^o_k}| \leq \bar{L}\right]\\
&= \PP\left[
\sigma_{\tilde{x}^o_k} = \sigma_{\tilde{z}^o_k}
\,\middle|\,|\sigma_{\tilde{x}^o_k}| \leq \bar{L}\right]
\,\PP\left[
\hat\sigma_{\tilde{z}^o_k} \neq \sigma_{\tilde{z}^o_k}
\,\middle|\,\sigma_{\tilde{x}^o_k} = \sigma_{\tilde{z}^o_k}, |\sigma_{\tilde{x}^o_k}| \leq \bar{L}\right]\\
&= \PP\left[
\sigma_{\tilde{x}^o_k} = \sigma_{\tilde{z}^o_k}
\,\middle|\,|\sigma_{\tilde{x}^o_k}| \leq \bar{L}\right]
\,\PP\left[
\hat\sigma_{\tilde{z}^o_k} \neq \sigma_{\tilde{z}^o_k}
\,\middle|\,|\sigma_{\tilde{z}^o_k}| \leq \bar{L}\right]\\
&\leq \PP\left[
\hat\sigma_{\tilde{z}^o_k} \neq \sigma_{\tilde{z}^o_k}
\,\middle|\,|\sigma_{\tilde{z}^o_k}| \leq \bar{L}\right].
\end{align*}
Plugging this back above and using Lemma~\ref{lemma:staying} and
Proposition~\ref{prop:Fitch2}
gives
\begin{equation}
    \label{eq:boundG}
\PP[(G^o_k)^c]
    \leq \tmax (\bar{L}+1)(\mu + \lambda + \eta)
+ \delta_a + \delta_1^2.
\end{equation}

\paragraph{Events $F^o_k$}
For $o = -,+$ and $k = 1,\ldots,B^o-1$, we use
Lemma~\ref{lemma:atmostone} to bound
the probability of
$(F^o_k)^c$. Here we take
$\tau := \sigma_{\tilde{x}^o_k}$ and $\mathcal{E} := (F_k^o)^c$. By construction (i.e., by the backbone sparsification pre-processing step), the edge length between $\tilde{x}^o_k$ and its backbone child $\tilde{x}^o_{k+1}$ is at most $2 \delta_1$.
By~\eqref{eq:controllingLength}
and Lemma~\ref{lemma:atmostone}, we get
\begin{align}
\PP[(F^o_k)^c]
&\leq \PP[(F^o_k)^c\,|\,|\sigma_{\tilde{x}^o_k}| \leq \bar{L}] + \delta_1^2\nonumber\\
&\leq \left\{2 \delta_1 (\bar{L}+2)[\mu+\lambda+\eta]\right\}^2 + \delta_1^2.\label{eq:boundF}
\end{align}

\paragraph{Union bound}
Taking a union bound over all events above
gives
\begin{align*}
\PP[\mathcal{B}]
&\leq 2 \left[\tmax (\bar{L}+1)(\mu + \lambda + \eta)
+ \delta_1^2\right]\\
&\qquad + \sum_{o=-,+}\sum_{k=1}^{B^o-1} \left[
\tmax (\bar{L}+1)(\mu + \lambda + \eta)
+ \delta_a + \delta_1^2
\right]\\
&\qquad + \sum_{o=-,+}\sum_{k=1}^{B^o-1} \left[
\left\{2 \delta_1 (\bar{L}+2)[\mu+\lambda+\eta]\right\}^2 + \delta_1^2
\right]
\end{align*}
by \eqref{eq:boundH},~\eqref{eq:boundG} and
\eqref{eq:boundF}. 
We make all terms in square brackets of order $\delta_1^2 \log \delta_1^{-1}$ by choosing $\delta_a := \delta_1^2$ and then choosing $0 < \tmax \leq \delta_1^2$ small enough 
for Proposition~\ref{prop:Fitch2} to hold. 
Then
we get, using~\eqref{eq:barLdef},
\begin{align*}
\PP[\mathcal{B}]
&\leq 2 \left[\delta_1^2 ( C' \log(\delta_1^{-1})+1)(\mu + \lambda + \eta)
+ \delta_1^2\right]\\
&\qquad + 2 (B^o-1) \left[
\delta_1^2 ( C' \log(\delta_1^{-1})+1)(\mu + \lambda + \eta)
+  2\delta_1^2
\right]\\
&\qquad + 2 (B^o-1) \left[
\left\{2 \delta_1 ( C' \log(\delta_1^{-1})+2)[\mu+\lambda+\eta]\right\}^2 + \delta_1^2
\right].
\end{align*}
Because the tree has height $h$ and each backbone edge has length at least $\delta_1$ (after pre-processing), with the exception of the first and last one on each side of the root, we must have $(B^o-2) \delta_1 \leq h$, or after rearranging $B^o \leq h/\delta_1 + 2$. Employing this bound and simplifying gives finally
\begin{equation*}
\PP[\mathcal{B}]
\leq C h \delta_1 \log^2(\delta_1^{-1}),
\end{equation*}
for a constant $C$ depending only on $\mu, \lambda, \eta$,
as claimed.
\end{proof}

\subsection{Proof of the theorem}

We are now ready to finish the proof of the main result.

\begin{proof}[Proof of Theorem~\ref{thm:main}]
For a fixed failure probability $\varepsilon$, we first choose $\delta_1$ small enough (as a function of $h, \mu, \lambda, \eta$) such that $C h \delta_1 \log^2(\delta_1^{-1}) \leq \varepsilon$. 
We then choose $\tmax$ small enough (again as a function of $h, \mu, \lambda, \eta$) 
that Proposition~\ref{prop:Intersection} implies $\PP[\mathcal{B}] \leq \varepsilon$.  
Proposition~\ref{prop:correct-align} then completes the proof of the theorem.
\end{proof}

\section*{Acknowledgments}

SR is grateful to Alexandre Bouchard-C\^ot\'e (UBC) for insightful discussions in the early stages of this project.

SR was supported by NSF grants DMS-1614242, DMS-1902892, DMS-1916378 
and DMS-2023239 (TRIPODS Phase II), as well as a Simons Fellowship and a Vilas Associates Award. BL was supported by NSF grants DMS-1614242, CCF-1740707 (TRIPODS), DMS-1902892 and a Vilas Associates Award (to SR).  BL was also supported by NSF grant DMS-1646108 (University of Michigan-Ann Arbor, Department of Statistics, RTG).

\newpage
\bibliographystyle{alpha}
\bibliography{ASRbib.bib}

\newpage
\appendix

\section{Further lemmas}

We formally justify Eq.~\eqref{eq:lambdastar}.
\begin{lemma}[Total rate]
The rate matrix $Q$ satisfies
\begin{equation*}
\sum_{\tau \in \mathcal{S}_1(\sigma)}Q(\sigma,\tau) = \lambda (|\sigma| + 1) + \mu |\sigma| + \eta \pi_1 |\sigma|_0 + \eta \pi_0 |\sigma|_1.
\end{equation*}
\end{lemma}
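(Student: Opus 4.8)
The plan is to decompose the sum according to the three disjoint mutation types and to compute each contribution directly from the generator of the TKF91 process. Since the sets $\mathcal{S}_{s}(\sigma)$, $\mathcal{S}_{d}(\sigma)$, $\mathcal{S}_{i}(\sigma)$ partition $\mathcal{S}_1(\sigma)$ (as already noted in the text, because the resulting sequence lengths differ across types), I can write
$$
\sum_{\tau \in \mathcal{S}_1(\sigma)} Q(\sigma,\tau)
= \sum_{\tau \in \mathcal{S}_{s}(\sigma)} Q(\sigma,\tau)
+ \sum_{\tau \in \mathcal{S}_{d}(\sigma)} Q(\sigma,\tau)
+ \sum_{\tau \in \mathcal{S}_{i}(\sigma)} Q(\sigma,\tau),
$$
and handle each term separately.

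The key observation to record up front is that $Q(\sigma,\tau)$ is, by definition, the aggregate rate of all elementary Poisson events (a substitution, deletion, or insertion at a specific position) that carry $\sigma$ to the single target $\tau$. Consequently, summing $Q(\sigma,\tau)$ over the distinct targets $\tau$ of a given type simply re-sums the rates of all state-changing events of that type, with no double counting---even when several positions happen to yield the same $\tau$ (as occurs inside a run of identical digits). With this in hand, the three sums become pure bookkeeping.

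For deletions, each of the $|\sigma|$ sites is removed at rate $\mu$, and every such event strictly shortens the sequence and hence changes the state; the deletion contribution is therefore $\mu |\sigma|$. For insertions, each of the $|\sigma|$ sites together with the immortal link gives birth at rate $\lambda$---that is $|\sigma|+1$ independent birth clocks---and every birth lengthens the sequence and thus changes the state, giving $\lambda(|\sigma|+1)$. For substitutions I must be slightly more careful: a substitution fires at each site at rate $\eta$, but it only produces an element of $\mathcal{S}_{s}(\sigma)$ when the redrawn digit differs from the current one. A site currently holding $0$ flips to $1$ at effective rate $\eta \pi_1$, and a site holding $1$ flips to $0$ at rate $\eta \pi_0$; summing over the $|\sigma|_0$ zero-sites and $|\sigma|_1$ one-sites yields $\eta \pi_1 |\sigma|_0 + \eta \pi_0 |\sigma|_1$.

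Adding the three contributions gives the claimed formula for $\lambda^{\ast}(\sigma)$. The only genuinely non-mechanical point---and the one I would state explicitly to avoid a fallacious over- or under-count---is the aggregation remark in the second paragraph: because distinct elementary events at different positions can land on the same sequence $\tau$ (e.g., deleting any of several adjacent equal digits, or inserting the same digit via different parents), one must argue that the sum over \emph{distinct} $\tau$ still equals the total state-changing event rate rather than merely counting distinct targets. Once that is settled, the result is immediate.
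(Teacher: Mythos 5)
Your proof is correct and follows essentially the same route as the paper's: decompose $\mathcal{S}_1(\sigma)$ into the three disjoint mutation types and sum the rates of the elementary events of each type, counting only state-changing substitutions. The aggregation point you flag as needing an explicit argument is exactly what the paper's proof handles by writing $Q(\sigma,\tau)=s_{\sigma,\tau}^{(D)}\mu$ for deletions and $Q(\sigma,\tau)=(s_{\sigma,\tau}^{(I)}+1)\lambda\pi_i$ for insertions, so that summing over the distinct targets arising from a run of identical digits recovers the total event rate with no over- or under-count.
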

\begin{proof}
The transition rate matrix $Q$ entries are as follows.  For the transition from $\sigma$ to $\tau$, where $\tau$ differs from $\sigma$ only by a single substitution, we have 
\begin{align*}
    Q(\sigma,\tau) = \eta \pi_i,
\end{align*} 
when label $i \in \{0,1\}$ is the product of the single substitution.
The sum over all $\tau$ in $\mathcal{S}_{s}(\sigma)$, the sequences that differ by $\sigma$ by a single substitution, is 
\begin{align}
    \label{eqn:SubstitutionSum}
    \sum_{\tau \in \mathcal{S}_{s}(\sigma)}Q(\sigma,\tau) = \eta \pi_1 |\sigma|_0 + \eta \pi_0 |\sigma|_1.
\end{align}  
Next, for the transitions from $\sigma$ to $\tau$ in $\mathcal{S}_{d}(\sigma)$, the sequences obtained by deleting a single digit of $\sigma$, we have
\begin{align*}
    Q(\sigma,\tau) = s_{\sigma,\tau}^{(D)}\mu,
\end{align*} 
where $s_{\sigma,\tau}^{(D)} \geq 1$ is the length of the repeated segment of letters where the deletion occurs.  Here, we observe that the position of the deletion within the segment is not needed.  This is because the letters are identical within the segment, so the remaining letters are identical when one is deleted. For example, if a zero is deleted from a string of $10$ zeros, then $s_{\sigma,\tau}^{(D)} = 10$.  Since only one deletion out of $|\sigma|$ can occur, the sum is
\begin{align}
\label{eqn:DeletionSum}
    \sum_{\tau \in \mathcal{S}_{d}(\sigma)}Q(\sigma,\tau) = |\sigma|\mu.
\end{align} 
Finally, for the transitions from $\sigma$ to $\tau$ in $\mathcal{S}_{i}(\sigma)$, the sequences obtained by inserting a digit into $\sigma$, we have 
\begin{align*}
    Q(\sigma,\tau) = (s_{\sigma,\tau}^{(I)}+1) \lambda \pi_i, \ i \in \{0,1\}
\end{align*} 
where $i$ is the digit inserted and $s_{\sigma,\tau}^{(I)} \geq 0$ is the size of the repeated segment of digit $i$ where the site is being inserted.  Noting that only one insertion can occur and the immortal link might insert a site, the sum is
\begin{align}
    \label{eqn:InsertionSum}
    \sum_{\tau \in \mathcal{S}_{i}(\sigma)}Q(\sigma,\tau) = (|\sigma| + 1)\lambda.
\end{align} 
The claim follows from combining~\eqref{eqn:SubstitutionSum},
\eqref{eqn:DeletionSum}
and~\eqref{eqn:InsertionSum}.
\end{proof}

\end{document}